\newcommand{\pin}{\ensuremath{\pi_n}}
\newcommand{\pinx}[1]{\ensuremath{\pi_n(#1)}}
\newcommand{\mut}[1]{\ensuremath{\mu_n^{#1}}}
\newcommand{\mutx}[2]{\ensuremath{\mu_n^{#1}(#2)}}
\newcommand{\dist}[2]{\ensuremath{d_{\text{\tiny TV}}\left(#1,#2\right)}}
\newcommand{\prob}[1]{\ensuremath{\mathbb{P}\left(#1\right)}}
\newcommand{\mean}[1]{\ensuremath{\mathbb{E}\left[#1\right]}}
\newcommand{\standev}[1]{\ensuremath{\sigma\left[#1\right]}}
\newcommand{\ztheta}[1]{\ensuremath{\zeta_n^{#1}}}
\newcommand{\dn}{\ensuremath{\delta_n}}
\newcommand{\ind}[1]{\ensuremath{\mathds{1}_{\{#1\}}}}
\newcommand{\Omgn}{\ensuremath{\Omega_n}}
\newcommand{\Ant}{\ensuremath{A_{n,\theta}}}
\newcommand{\kinf}{\ensuremath{\frac{1}{2}\sqrt{\frac{n}{1-\beta}}}}
\newcommand{\ksup}{\ensuremath{\frac{n}{2}}}
\newcommand{\ksuplog}{\ensuremath{\frac{n}{\log n}}}
\newcommand{\ksuploglog}{\ensuremath{\frac{\sqrt{n}}{\log\log n}}}
\newcommand{\kinflog}{\ensuremath{\frac{1}{2}\log n\sqrt{\frac{n}{1-\beta}}}}
\newcommand{\ksupk}{\ensuremath{{\frac{n}{2}-k}}}
\newcommand{\nhpk}{\ensuremath{\frac{n}{2}+k}}
\newcommand{\fone}{\ensuremath{1+e^{\frac{4\beta k}{n}}}}
\newcommand{\myh}{\ensuremath{h}}
\newcommand{\myp}{\ensuremath{\phi}}
\newcommand{\myH}{\ensuremath{H}}
\newcommand{\myP}{\ensuremath{\Phi}}
\newtheorem{thm}{Theorem}[section]
\newtheorem{lmm}[thm]{Lemma}
\newtheorem{crl}[thm]{Corollary}
\theoremstyle{definition}
\newtheorem{rmk}{Remark}[section]
\newtheorem{dfn}{Definition}[section]
\numberwithin{equation}{section}
\begin{document}

\title{Entropy-driven cutoff phenomena}


\author[1]{Carlo Lancia}
\author[2]{Francesca R. Nardi}
\author[1]{Benedetto Scoppola}
\affil[1]{University of Rome TorVergata}
\affil[2]{Technische Universiteit Eindhoven}

\maketitle

\begin{abstract}
In this paper we present, in the context of Diaconis' paradigm, a
general method to detect the cutoff phenomenon. We use this method to
prove cutoff in a variety of models,
some already known and others not yet appeared in literature, including a
chain which is non-reversible w.r.t.\ its stationary measure.
All the given examples clearly indicate that a drift towards the opportune
quantiles of the stationary measure could be held responsible for this
phenomenon. 
In the case of birth-and-death chains this mechanism is fairly well
understood; our work is an effort to generalize this picture to more
general systems, such as systems having stationary measure spread over the
whole state space or systems in which the study of the cutoff may not be 
reduced to a one-dimensional problem.
In those situations the drift may be looked for by means of a suitable
partitioning of the state space into classes; using a statistical mechanics
language it is then possible to set up a kind of energy-entropy
competition between the weight and the size of the classes. Under the lens
of this partitioning one can focus the mentioned drift and prove cutoff
with relative ease.
\newline
\newline
\noindent
\textbf{Keywords}: Finite Markov chains, hitting times, cutoff, random walk on
the hypercube.
\end{abstract}

\section{Introduction and Main Results}
\label{sec:intro}

In this paper we present sufficient conditions for a family of finite
ergodic Markov chains to exhibit cutoff. Roughly speaking the cutoff
phenomenon is an abrupt convergence of a Markov chain to its
equilibrium distribution. 
The detailed description of the cutoff phenomenon is given by means of
two quantities, 
the \textit{cutoff-time} and the \textit{cutoff-window}, the latter being much smaller
than the former. For an overview on the cutoff phenomenon we refer the
reader to the review paper by Diaconis~\cite{diaconis1996cpf} and the
book by Levin, Peres and Wilmer~\cite{levin2006mca}.

Our main results, Theorem~\ref{thm2} and its corollary, 
identify with much clarity the cutoff-time as the expected value of
a certain hitting time, and
for the first time in literature such hitting time is related to some entropy
considerations, see Section~\ref{sec:times} below.~Corollary~\ref{crl1}
also gives evidence of the nature of the
cutoff-window, which is in turn kindred to the standard deviation of
the hitting time mentioned above and/or to the mixing features of the
chain.
The level of generality of the key results
gives the possibility to use statistical-mechanics-based ideas to prove
cutoff for a variety of models known in literature, such as Coupon
Collector, Top-in-at-random, Ehrenfest Urn, Random walk on the
hypercube and mean-field Ising model. Furthermore, we prove cutoff for
a couple of one-parameter 
families of random walks, partially biased (i.e.\ with drift) and partially diffusive,
whose peculiar feature is to have cutoff-window of different order
depending on the parameter.
It is worthy to notice that the first of those families is an example
of non-reversible chain exhibiting cutoff (see Section~\ref{sec:sutr-model}).

Section~\ref{sec:framework-notation} defines the structure of our
study, Section~\ref{sec:times} gives some of the ideas standing
behind the main results and draw a comparison with previous
approaches. Section~\ref{sec:key-thm} states our key theorems 
while Section~\ref{sec:discussion} examines
them and gives an explanation of the hypothesis. All the proofs are deferred to
Section~\ref{sec:proof}.  In Section~\ref{sec:apps} we discuss the
application of our results to the models mentioned above.

\subsection{Framework and notation}
\label{sec:framework-notation}

In what follows we will consider \textit{families of finite ergodic Markov chains},
that is sextets of the form 
\[\{\Omgn, X_n^t, P_n, \pi_n, \mut{t}, \mut{0}\}\]
where $\Omgn$ is the finite state space of the $n$-th chain,
$X_n^t$, which has transition matrix $P_n$ and unique stationary
measure $\pi_n$. The symbols $\mut{0}$ and $\mut{t}$ stand for the
initial distribution of the $n$-th chain and its probability
distribution after $t$ steps. The time $t$ is a discrete quantity.
For brevity we will refer to such families simply as families of
Markov chains, omitting the expression ``finite ergodic'' throughout
the whole paper.

\begin{dfn}
\label{def:diac}
  A family of Markov chains is said to exhibit cutoff if there exist
  two sequences of integers, $\{a_n\}$ and $\{b_n\}$ such that
  \begin{equation}
    \frac{b_n}{a_n} \TendsTo[n,\infty] 0\label{eq:10}
  \end{equation}
  and
  \begin{align}
    \lim_{\theta\to\infty} \liminf_{n\to\infty} \dist{\mut{a_n-\theta
        b_n}}{\pin} &= 1 \label{eq:14}\\
    \lim_{\theta\to\infty} \limsup_{n\to\infty} \dist{\mut{a_n+\theta
        b_n}}{\pin} &= 0\label{eq:15}
  \end{align}
\end{dfn}

\begin{figure}[htbp]
\begin{center}
\includegraphics[width=0.9\textwidth]{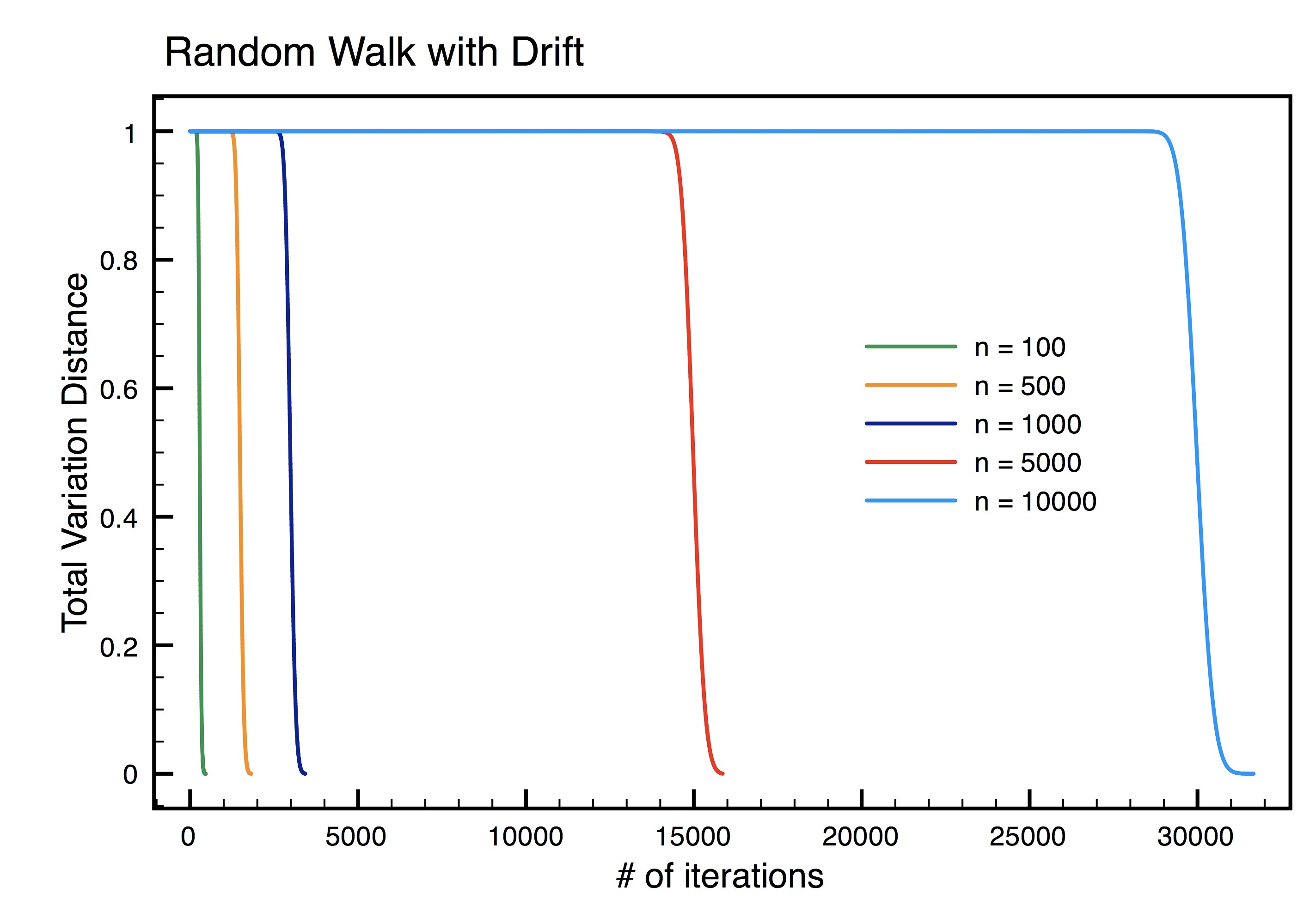}
\caption{Biased random walk on a segment. The transition probabilities
are $P_{i,i-1} = \frac{1}{6}$, $P_{i,i} = \frac{1}{3}$ and $P_{i,i+1} =
\frac{1}{2}$. The curves refer to different values of $n$, the length of the segment.}
\label{fig:brw}
\end{center}
\end{figure}

Equations~\eqref{eq:14} and~\eqref{eq:15} represent the sharp
convergence to the equilibrium distribution, see Figure~\ref{fig:brw}. The distance from
stationarity is taken here to be the usual \textit{total variation distance}
\begin{align}
  \dist{\mut{t}}{\pin} &= \frac{1}{2} \sum_{i\in\Omgn} \left\vert
    \pinx{i} - \mutx{t}{i}\right\vert
  = \max_{A\in\Omgn} \left[\pinx{A} - \mutx{t}{A}\right]\label{eq:23}
\end{align}
\begin{rmk}
  Definition~\ref{def:diac} was first introduced in~\cite{aldous1986sca}.
  Although there exist equivalent alternative definitions of cutoff
  (see~\cite{barrera2009abrupt},~\cite{ding2010total} and~\cite{martinez2001decaying}) we
  prefer to work with the one given, for it leaves us control on the cutoff-window.
\end{rmk}

As mentioned above there exists a connection between the cutoff-time
and the expectation of an hitting time. That connection can be easily
pointed out if we think 
of the total variation distance between \mut{t}{} and \pin{} 
(which, in principle, could be
computed at any given time) as a random variable, or better as a
deterministic object computed at a stochastic time. This idea
motivates the following
\begin{dfn}
  Given a random variable $\xi$, we define the total
  variation distance at time $\xi$ as the following r.v.
  \begin{equation}
    \label{eq:18}
    \dist{\mut{\xi}}{\pin} = \sum_{t \in\mathbb{Z}} \dist{\mut{t_+}}{\pin}
    \ind{\xi = t}
  \end{equation}
  where $t_+ = \max\{0, t\}$. When $\xi$ takes values in
  $[-a,+\infty)$, with $a\in\mathbb{R}^+$, this definition is
  equivalent to
  \begin{align*}
    \dist{\mut{\xi}}{\pin} &= \sum_{t \geq -a} \left[\dist{\mut{t}}{\pin}
    \ind{\xi = t, \, \xi\geq0 }  + \dist{\mut{0}}{\pin}
    \ind{\xi = t, \, \xi < 0 }\right] \tag{\ref{eq:18}a} \label{eq:18abc}
  \end{align*}
\end{dfn}

We need this definition because in the statements of the key theorems
we will consider the expectation of~\eqref{eq:18abc} 
at the stochastic time $\zeta-a$, where $\zeta \geq0$ is a hitting time. 
This is a natural consequence of our aim to care about
the cutoff-window.
The expectation of~\eqref{eq:18abc} can be computed as
\begin{align}
  \mean{\dist{\mut{\xi}}{\pin}} &= \sum_{t \geq 0} \dist{\mut{t}}{\pin}
    \prob{\xi = t, \, \xi\geq0 } + \dist{\mut{0}}{\pin}
    \prob{\xi < 0 } \label{eq:19}
\end{align}
Although the condition $\xi \geq 0$ could be dropped in the first sum
of~\eqref{eq:19}, we prefer to keep it for notational purposes that
will become clear in the proof of Theorem~\ref{thm2}.

\subsection{Cutoff-times and hitting-times}
\label{sec:times}

Theorem~\ref{thm2} and its Corollary~\ref{crl1} bring to light the
link between the cutoff 
phenomenon and the hitting of the \textit{relevant part} of the state
space \Omgn. Relevant part means the subset of the state
space where the stationary distribution \pin{} is mostly concentrated,
see equation~\eqref{eq:26} below.
This seems quite a natural approach when we realize that nearly every
chain known to exhibit cutoff hits the relevant part of the state
space in a quasi-deterministic way, that is the hitting time $\tau_n$
of such a relevant part satisfies the following limit:
\begin{equation}
  \label{eq:27}
  \frac{\standev{\tau_n}}{\mean{\tau_n}} \TendsTo[n,\infty] 0
\end{equation}
where $\standev{\tau_n}$ is the standard deviation of $\tau_n$. 
It is relatively easy to prove a limit as in~\eqref{eq:27} whenever the
chain presents a \textit{drift} towards the relevant part of the state space.
In Section~\ref{sec:apps} we
present a rich selection of examples of applications of our theorems
as well as a comparison with the existing literature. 

The picture of a quasi-deterministic hitting we have described so far
holds as well for the systems with uniform stationary
measure, for which the relevant part of the state space would be
\Omgn{} itself.
As a matter of fact, if we desist from the whole description of such a
chain and look into a suitable projection, then we may find that the original
stationary distribution is no longer uniform. 
The projected stationary distribution, $\nu_n(x)$, is
indeed proportional to the number of states $i \in \Omgn$ which
correspond to $x$ according to the equivalence relation we used to
project the original chain. 
Consequently, since $-\nu_n(x)\log
\nu_n(x)$ is the contribution to the  entropy of $\nu_n$ given by the
$x$-th equivalence
class, we have that the relevant part of the state space is composed of the
classes providing the leading contribution to the entropy. In these
cases the drift mentioned above is therefore supplied by entropic considerations;
we will return to this point later on in Section~\ref{sec:apps}. 
With respect to what we have said above, Corollary~\ref{crl1}
represents then a possible \textit{trait d'union} between two
classes of Markov chains exhibiting cutoff: the
first being made up of chains having stationary measure
concentrated in a small subset of the state
space, like birth-and-death chains with drift, and the second 
composed of those chains with stationary measure uniform or spread over
\Omgn{}, like the random walk on the hypercube, many
card-shuffling models and some high-temperature statistical mechanics models.

The idea of relating cutoff with the hitting of the appropriate quantiles of the
stationary distribution is already present in literature,
see~\cite{ding2010total},~\cite{diaconis2006separation},~\cite{martinez2001decaying}
and~\cite{barrera2009abrupt}.
In~\cite{ding2010total}~and~\cite{diaconis2006separation} the cutoff
is completely characterized for the special case of birth-and-death
chains, respectively in total variation and in separation distance. A
discussion of the results in~\cite{ding2010total} is deferred to
Section~\ref{sec:discussion} after we have stated our main theorems.
With respect to~\cite{martinez2001decaying}
and~\cite{barrera2009abrupt} our approach allows the
study of the cutoff phenomena in a context closer to the classical
Diaconis' paradigm. In particular, with respect to the former
reference we define cutoff in a finite configurations space and consequently
we have a precise control of the cutoff-window. With respect to the
latter, we will show in Sections~\ref{sec:discussion} and~\ref{sec:if-model} that our tackle
to the problem produces a clearer understanding of the role of the
drift in the cutoff phenomena.

\subsection{Key results}
\label{sec:key-thm}

In this first theorem, that will be the main ingredient of the proof of
Theorem~\ref{thm2}, we relate the cutoff phenomenon to systems having
an abrupt convergence to equilibrium at a stochastic time which is
quasi-deterministic in the sense of~\eqref{eq:27}.

\begin{thm}
\label{lmm1}
  Let $\{\Omega_n, X_n^t, P_n, \pi_n, \mut{t}, \mut{0}\}$ a family of
  Markov chains, $\{\tau_n\}$ a family of non-negative
  random variables with finite expected value $E_n = \mean{\tau_n}$ and
  standard deviation $\sigma_n = \standev{\tau_n}$ such that
  \begin{equation}
    \label{eq:1}
    \lim_{n\to\infty} \frac{\sigma_n}{E_n} = 0
  \end{equation}
  Let $\{\delta_n\}$ be a sequence of positive numbers such that
  \begin{equation}
    \lim_{n\to\infty} \frac{\delta_n}{E_n} = 0\label{eq:2}
  \end{equation}
  \begin{align}
    \text{definitively for } n \to \infty & \quad
    \mean{\dist{\mut{\tau_n-\theta\delta_n}}{\pin}} \geq
    1-f(\theta) \label{eq:3}\\
    \text{definitively for } n \to \infty & \quad
    \mean{\dist{\mut{\tau_n+\theta\delta_n}}{\pin}} \leq g(\theta) \label{eq:4}
  \end{align}
  where $f$ and $g$ are two functions tending to $0$ as $\theta \to
  \infty$.

  \noindent Then the family exhibits cutoff with
  \begin{align}
    a_n &= E_n\label{eq:5}\\
    b_n &= O(\sigma_n + \delta_n)
  \end{align}
\end{thm}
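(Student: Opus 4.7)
The plan is to combine a Chebyshev concentration of $\tau_n$ around $E_n$ with the classical monotonicity of the total variation distance from stationarity: for any ergodic Markov chain, $t \mapsto \dist{\mut{t}}{\pin}$ is non-increasing in $t$, since $P_n$ is a contraction against \pin{}. Choose $a_n$ to be an integer of size $E_n + O(1)$ and $b_n$ an integer of size $\sn + \dn + O(1)$; then $b_n/a_n \to 0$ is immediate from hypotheses (\ref{eq:1}) and (\ref{eq:2}). What remains is to cash in the stochastic-time bounds (\ref{eq:3})--(\ref{eq:4}) as the deterministic-time bounds (\ref{eq:14})--(\ref{eq:15}).

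For the upper bound (\ref{eq:15}), I would expand $\mean{\dist{\mut{\tau_n + \theta\dn}}{\pin}}$ as in (\ref{eq:19}) and restrict the summation to the Chebyshev-favorable event $B_\theta = \{\tau_n \leq E_n + \theta\sn\}$. On $B_\theta$ one has $\tau_n + \theta\dn \leq a_n + \theta b_n$ for $n$ large enough (modulo $O(1)$ corrections absorbed in the choice of $b_n$), so monotonicity gives the pointwise bound $\dist{\mut{\tau_n + \theta\dn}}{\pin} \geq \dist{\mut{a_n + \theta b_n}}{\pin}$ on $B_\theta$. Factoring out the deterministic quantity and using $\prob{B_\theta^c} \leq 1/\theta^2$ yields
\begin{equation*}
  \dist{\mut{a_n + \theta b_n}}{\pin} \leq \frac{g(\theta)}{1 - 1/\theta^2}
\end{equation*}
for all $n$ sufficiently large; taking $\limsup_n$ and then $\theta \to \infty$ produces (\ref{eq:15}).

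The lower bound (\ref{eq:14}) is the mirror argument and is where the only real bookkeeping difficulty arises. I would expand $\mean{\dist{\mut{\tau_n - \theta\dn}}{\pin}}$ via (\ref{eq:19}) and split on $C_\theta = \{\tau_n \geq E_n - \theta\sn\}$; on $C_\theta$, monotonicity gives $\dist{\mut{(\tau_n - \theta\dn)_+}}{\pin} \leq \dist{\mut{(a_n - \theta b_n)_+}}{\pin}$, while on $C_\theta^c$ one uses only the trivial bound $\dist{\cdot}{\cdot} \leq 1$. Together with $\prob{C_\theta^c} \leq 1/\theta^2$ this produces
\begin{equation*}
  \dist{\mut{(a_n - \theta b_n)_+}}{\pin} \geq 1 - f(\theta) - 1/\theta^2.
\end{equation*}
The delicate point is the positive-part truncation inherited from (\ref{eq:18abc}), which is needed because $\tau_n - \theta\dn$ can be negative; however, (\ref{eq:1})--(\ref{eq:2}) guarantee that for every fixed $\theta$ one has $a_n - \theta b_n > 0$ for all sufficiently large $n$, so the truncation is inert and one obtains a genuine lower bound on $\dist{\mut{a_n - \theta b_n}}{\pin}$. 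Taking $\liminf_n$ and then $\theta \to \infty$ yields (\ref{eq:14}) and completes the argument.
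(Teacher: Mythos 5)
Your proof is correct and follows essentially the same strategy as the paper: expand $\mean{\dist{\mut{\tau_n\pm\theta\delta_n}}{\pin}}$, isolate a concentration event for $\tau_n$, and use monotonicity of the total variation distance (Lemma~\ref{lemma1}) to replace the stochastic time by the deterministic time $a_n\pm\theta b_n$ on the favorable event. The only cosmetic difference is that the paper applies the one-sided Cantelli inequality (giving $1/(1+\theta^2)$) rather than two-sided Chebyshev, and in the upper-bound step it bounds $D\cdot(1-1/(1+\theta^2))\geq D-1/(1+\theta^2)$ using $D\leq1$ instead of dividing through by $1-1/\theta^2$; both variants are immaterial.
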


\mbox{}

Before we move to the statement of Theorem~\ref{thm2} we need to
introduce some tools.

\begin{dfn}
  We define a \textit{family of nested subsets} as a sequence
  $\{\Ant\}_{\theta\geq1}$ with the following properties: $ \forall \,
  \theta \in\mathbb{N}, \:\: \exists \, N>0 \text{ such that }
  \forall\, n \geq N$
  \begin{align}
    &A_{n,\theta^\prime} \subset \Omgn & \forall \, 1\leq
    \theta^\prime \leq \theta\label{eq:24} \\
    &A_{n,\theta^\prime} \subseteq A_{n,\theta^{\prime\prime}}
    &\forall \, 1\leq \theta^\prime \leq \theta^{\prime\prime}
    \leq\theta\label{eq:25}
  \end{align}
\end{dfn}

\begin{dfn}
  Given a family of nested subsets we shall say that \pin{} is
  \textit{h-concentrated} on \Ant{} if there exists a function
  $h(\theta)$ tending to zero as $\theta \to \infty$ such that
  \begin{equation}
    \label{eq:26}
    \text{definitively as } n\to \infty
    \qquad \pin(\Ant^\complement) < h(\theta)
  \end{equation}
  where $\Ant^\complement = \Omgn \setminus \Ant$.
\end{dfn}


Finally, define $\ztheta{\theta} = \min\{t\geq0 \,:\, X_n^t \in
\Ant\}$ the hitting time of \Ant{}; note that
$\ztheta{\theta} \geq \ztheta{\theta^\prime}$ if $\theta \leq \theta^\prime$.
We are now ready to state the main result of this paper.
\begin{thm}
\label{thm2}
  Let $\{\Omega_n, X_n^t, P_n, \pi_n, \mut{t}, \mut{0}\}$ a family of
  Markov chains. Suppose that $\mut{0}$ is such that there exists a family of
  nested subsets $\{\Ant\}_{\theta\geq1} \subset \Omgn$ with the
  following properties:
  \begin{align}
    & \pin \text{ is h-concentrated in } \Ant\label{eq:6}\\
    &\frac{\standev{\ztheta{1}}}{\mean{\ztheta{1}}} \TendsTo[n,\infty]
    0\label{eq:7}\\
    & \standev{\ztheta{\theta}} \leq \standev{\ztheta{1}}\label{eq:13}
  \end{align}
  and there exists a sequence of positive integers $\{\Delta_n\}$ such that
  \begin{align}
    &\frac{\Delta_n}{\mean{\ztheta{1}}} \TendsTo[n,\infty] 0\label{eq:8}\\
    &\lim_{\theta\to\infty} \lim_{n\to\infty} \frac{\mean{\ztheta{1} -
        \ztheta{\theta}}}{\theta\Delta_n} = 0\label{eq:9}
  \end{align}
  Then there exists a function $f(\theta)$, 
  tending to 0 as $\theta\to\infty$, such that
  \begin{align}
    \mean{\dist{\mut{\ztheta{1}-\theta\delta_n}}{\pin}} \geq
    1-f(\theta) & \quad \text{definitively for } n \to \infty\label{eq:11}
  \end{align}
  where
  \begin{equation}
    \label{eq:12}
    \delta_n = 2\,(\Delta_n + \standev{\ztheta{1}})
  \end{equation}
\end{thm}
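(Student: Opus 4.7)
The strategy is to exhibit a distinguishing event whose $\pin$-mass is close to $1$ while having small $\mut{t}$-mass at the times $t$ dominating the expectation in \eqref{eq:11}. The natural candidate is $\Ant$ itself: by $h$-concentration $\pinx{\Ant} \geq 1 - h(\theta)$ definitively, while at any deterministic time $t \geq 0$ the inequality $\mutx{t}{\Ant} = \prob{X_n^t \in \Ant} \leq \prob{\ztheta{\theta} \leq t}$ gives the pointwise estimate $\dist{\mut{t}}{\pin} \geq \pinx{\Ant} - \mutx{t}{\Ant} \geq 1 - h(\theta) - \prob{\ztheta{\theta} \leq t}$. Inserting the random time $\xi := \ztheta{1} - \theta\dn$ into the expansion \eqref{eq:19} produces
\[
\mean{\dist{\mut{\xi}}{\pin}} \geq 1 - h(\theta) - \sum_{t} \prob{\ztheta{\theta} \leq t_+}\,\prob{\xi = t},
\]
so the whole problem reduces to showing the residual sum is $O(1/\theta^2)$ as $n$ and $\theta$ grow.

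The plan for the sum is a simple threshold split. Choose $T := \mean{\ztheta{\theta}} - \theta\sn$; using \eqref{eq:1} and \eqref{eq:9}, $\mean{\ztheta{\theta}} \sim \mean{\ztheta{1}}$ and $\theta\sn = o(\mean{\ztheta{1}})$, so $T>0$ eventually. Cutting at $t_+ = T$ and using the trivial bound $1$ above the threshold,
\[
\sum_t \prob{\ztheta{\theta} \leq t_+}\,\prob{\xi = t} \leq \prob{\ztheta{\theta} < T} + \prob{\xi_+ \geq T}.
\]
Chebyshev combined with \eqref{eq:13} yields $\prob{\ztheta{\theta} < T} \leq \standev{\ztheta{\theta}}^2/(\theta\sn)^2 \leq 1/\theta^2$. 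For the second term, $\prob{\xi_+ \geq T} = \prob{\ztheta{1} \geq T + \theta\dn}$; substituting $\dn = 2(\Delta_n + \sn)$ and using \eqref{eq:9} to absorb the $\mean{\ztheta{1} - \ztheta{\theta}}$ discrepancy, one checks $T + \theta\dn \geq \mean{\ztheta{1}} + \theta\sn$ for $\theta$ and $n$ large, so Chebyshev again gives $\leq 1/\theta^2$. Hence $\mean{\dist{\mut{\xi}}{\pin}} \geq 1 - h(\theta) - 2/\theta^2 =: 1 - f(\theta)$.

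The most delicate bookkeeping lies precisely in that inequality $T + \theta\dn \geq \mean{\ztheta{1}} + \theta\sn$: this is exactly why the factor $2$ and both summands $\Delta_n$ and $\sn$ must appear together in \eqref{eq:12}, the first to absorb the $\mean{\ztheta{1} - \ztheta{\theta}}$ gap allowed by \eqref{eq:9} and the second to leave a Chebyshev-safe margin around $\mean{\ztheta{1}}$. It is also the point where the iterated double limit in \eqref{eq:9} is essential: for each fixed large $\theta$ the estimate is obtained by sending $n\to\infty$ first, and only afterwards does $\theta\to\infty$ drive $f(\theta)\to 0$.
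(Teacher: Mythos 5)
Your proof is correct and follows the paper's general strategy (use $\Ant$ as the distinguishing event and control $\prob{\ztheta{\theta} \leq t}$), but the technical execution is genuinely cleaner in two respects. First, you avoid the paper's detour through the auxiliary distribution $\rho_n(i) = \sum_{t\geq0}\mutx{t}{i}\prob{\xi = t \mid \xi\geq0}$ and the prefactor $\prob{\xi\geq0}$, replacing them with the direct pointwise bound $D(t_+) \geq 1 - h(\theta) - \prob{\ztheta{\theta}\leq t_+}$ summed against the law of $\xi$. Second, and more substantially, the paper estimates the residual sum by pushing the entire margin $2\theta\Delta_n + \theta\sn - \mean{\ztheta{1}-\ztheta{\theta}}$ into a single Cantelli bound on $\ztheta{\theta}$, which forces the two-case split according to whether $\sn = o(\Delta_n)$ or $\Delta_n = O(\sn)$; your threshold $T = \mean{\ztheta{\theta}} - \theta\sn$ distributes the margin between two Chebyshev applications — one on $\ztheta{\theta}$ with window $\theta\sn$ (giving $\leq 1/\theta^2$ by \eqref{eq:13}), and one on $\ztheta{1}$ with window $\theta\sn$ once \eqref{eq:9} has absorbed the $\mean{\ztheta{1}-\ztheta{\theta}}$ gap into $2\theta\Delta_n$ — so the case distinction disappears. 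The outcome is the same $f(\theta) = h(\theta) + 2/\theta^2$, and the roles of the two summands in $\delta_n$ are, as you observe, made more transparent by this decomposition.
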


\mbox{}

A relatively easy consequence of Theorem~\ref{thm2} is the following

\begin{crl}
\label{crl1}
  Assume that all the hypothesis of Theorem~\ref{thm2} hold for a
  given family of Markov chains.
  In addition suppose that given two copies of the $n$-th chain of the
  family, $Z_n^t$ and $W_n^t$, there exists a coupling $(Z_n^t,
  W_n^t)$ such that
  \begin{align}
   & Z_n^0 = z_0 \in A_{n,1} \quad W_n^0 \sim \pin\label{eq:17}\\
   & \text{if } Z_n^{s^*} = W_n^{s^*} \text{ then } Z_n^s = W_n^s
   \quad \forall\: s\geq s^*\label{eq:70} \\
   & \gamma_n = \min\{t \geq 0 \::\: Z_n^t = W_n^t\} \text{ is such that
   } \nonumber\\
   &\max_{z_0 \in A_{n,1}}\prob{\gamma_n > \theta\delta_n \,|\, Z_n^0 = z_0} \leq
   g(\theta) \text{ definitively as } n\to\infty \label{eq:92}
  \end{align}
  with $g(\theta) \TendsTo[\theta,\infty] 0$.
  Then the family exhibits cutoff with
  \begin{align}
    a_n &= \mean{\ztheta{1}}\label{eq:29}\\
    b_n &= O(\delta_n)\label{eq:30}
  \end{align}
\end{crl}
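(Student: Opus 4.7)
The plan is to reduce Corollary \ref{crl1} to Theorem \ref{lmm1} by using Theorem \ref{thm2} as the lower-bound ingredient and the coupling hypothesis (\ref{eq:17})--(\ref{eq:92}) as the upper-bound ingredient. I apply Theorem \ref{lmm1} with $\tau_n = \ztheta{1}$, $E_n = \mean{\ztheta{1}}$, $\sigma_n = \standev{\ztheta{1}}$ and $\delta_n$ from (\ref{eq:12}); then hypothesis (\ref{eq:1}) is (\ref{eq:7}) verbatim, while $\delta_n/E_n\to 0$ follows from (\ref{eq:7}) combined with (\ref{eq:8}). The lower bound (\ref{eq:3}) is exactly the conclusion of Theorem \ref{thm2}, so the only real work is to produce the upper bound (\ref{eq:4}).

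For the upper bound I build a two-phase coupling $(Z_n^t, W_n^t)$ with $Z_n^0 \sim \mut{0}$ and $W_n^0 \sim \pin$: the two copies evolve independently until time $\ztheta{1}$, and from that moment onward they follow the coupling of (\ref{eq:17})--(\ref{eq:92}) applied to the pair $(Z_n^{\ztheta{1}}, W_n^{\ztheta{1}})$. This is legitimate because the marginal law of $W_n^{\ztheta{1}}$ is \pin{} and the strong Markov property applies at $\ztheta{1}$. Decomposing over the value of $\ztheta{1}$ and using the standard inequality $\dist{\mut{t}}{\pin}\leq \prob{Z_n^t\neq W_n^t}$ together with (\ref{eq:92}), I expect to obtain the fixed-time estimate
$$
  \dist{\mut{t}}{\pin} \;\leq\; g(\theta/2) \;+\; \prob{\ztheta{1} > t - \theta\delta_n/2},
$$
where the first term handles trajectories that have spent at least $\theta\delta_n/2$ units inside $\Ant$ by time $t$ and the second those that have not.

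Averaging this at $t = s + \theta\delta_n$ against $\prob{\ztheta{1}=s}$ and bounding the tail by Chebyshev (using that (\ref{eq:12}) forces $\delta_n \geq 2\sigma_n$, so deviations of $\ztheta{1}$ from $E_n$ beyond $\theta\sigma_n/2$ have probability $O(\theta^{-2})$) yields
$$
  \mean{\dist{\mut{\ztheta{1}+\theta\delta_n}}{\pin}} \;\leq\; g(\theta/2) + O(\theta^{-2}),
$$
which is (\ref{eq:4}). Theorem \ref{lmm1} then delivers cutoff with $a_n = \mean{\ztheta{1}}$ and $b_n = O(\sigma_n+\delta_n) = O(\delta_n)$, the last equality again by (\ref{eq:12}). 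The main delicate point is the calibration of the split parameter $\theta\delta_n/2$: it must be large enough to leave room both for the coupling estimate and for Chebyshev on the fluctuations of $\ztheta{1}$, yet small enough that $g(\theta/2)$ still vanishes with $\theta$; the inequality $\delta_n \geq 2\sigma_n$ built into (\ref{eq:12}) is exactly what makes these two scales compatible.
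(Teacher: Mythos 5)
Your proof is correct and follows essentially the same route as the paper: the two-phase coupling (independent evolution up to $\ztheta{1}$, then the given coupling) combined with the strong Markov property at $\ztheta{1}$ delivers the upper bound~\eqref{eq:4}, the lower bound~\eqref{eq:3} is imported from Theorem~\ref{thm2}, and Theorem~\ref{lmm1} closes the argument. The only difference is cosmetic, namely how the tail of $\ztheta{1}$ is controlled --- you fold a Chebyshev estimate directly into a deterministic-time bound and then average, whereas the paper truncates at an arbitrary level $M$ and appeals to~\eqref{eq:16}.
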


\subsection{Discussion of Theorem~\ref{thm2}}
\label{sec:discussion}

Theorem~\ref{thm2} identifies a general structure that underlies a
class of systems exhibiting cutoff: those with stationary measure
concentrated in a small region of the state space (\Ant{} in the
theorem, see \eqref{eq:24}-\eqref{eq:26} above).
Although widely general, Theorem~\ref{thm2} is most useful
when we face a family of Markov chains $X_n^t$ which is, or can be
projected onto, a family of birth-and-death chains. In those cases we have
indeed closed formulas to deal
with expectation and variance of the various hitting times, see for
example~\cite{barrera2009abrupt} or~\cite{feller1968intro}. 
The non-reversible random walk on a cylindrical lattice, presented in
Section~\ref{sec:sutr-model}, shows that the application of
Theorem~\ref{thm2} is not restricted solely to those models where the study of
the cutoff can be completely reduced to a one-dimensional problem.


Total variation cutoff was completely settled in~\cite{ding2010total}
for the class of birth-and-death chains, in particular it is shown
therein that we have cutoff if and only if $t^{(n)}_{\text{REL}} =o(t^{(n)}_{\text{MIX}})$, 
where $t^{(n)}_{\text{REL}}$ and $t^{(n)}_{\text{MIX}}$ are
respectively the relaxation time and the mixing
time of the $n$-th chain. It should be pointed out, however, that in
some importat models of statistical mechanics, namely the Ehrenfest
urn and the magnetization chain for the mean field Ising model, a non optimal
$\sqrt{t^{(n)}_{\text{REL}}\cdot t^{(n)}_{\text{MIX}}}$ window
order is found. 
Our approach conversely, provided a suitable definition of the
\Ant{}'s (see Remark~\ref{rmk:1.2} below),
is always capable of delivering the right cutoff-window
order. Moreover, in most situations the computation of $\mean{\ztheta{\theta}}$ and
$\standev{\ztheta{\theta}}$ happens to be less challenging than the 
computation of the spectral gap of the chain.

Within the framework of birth-and-death chains, \pin{} being
concentrated in \Ant{} is equivalent to a drift
of the chain towards \Ant{} itself; such a drift is likely to ensure
\begin{equation}
  \label{eq:106}
  \frac{\standev{\ztheta{\theta}}}{\mean{\ztheta{\theta}}}
  \TendsTo[n,\infty] 0
  \qquad \forall \, \theta \geq 1
\end{equation}
Limit~\eqref{eq:106} means in turn that for $n$ sufficiently large 
the chain will hit \Ant{}
in a quasi-deterministic way, that is the probability of $X_n^t$
being into \Ant{} will suddenly rise from 0 to 1 in a window of size
$\standev{\ztheta{\theta}}$ centered on $\mean{\ztheta{\theta}}$.
This means that, if the system was started outside \Ant{}, it is
undergoing the first part of the cutoff curve, i.e.\ it satisfies~\eqref{eq:14}. 
If the system relaxes
inside \Ant{} in a time interval that is comparable with
$\standev{\ztheta{\theta}}$, then we would experience cutoff with a
window of the order of $\standev{\ztheta{\theta}}$. It is also
possible that the time $t_{\text{mix}}$ needed for the system to relax inside \Ant{} is
larger than $\standev{\ztheta{\theta}}$ but smaller than
$\mean{\ztheta{\theta}}$, implying then cutoff with a cutoff window of
the order of $t_{\text{mix}}$. This is the case of the Ehrenfest Urn
and the Random Walk on the Hypercube, which we present in detail
in Section~\ref{sec:ehrenfest}.

The technical problem we had to face in designing Theorem~\ref{thm2}
is the fact that $\mean{\ztheta{\theta}}$ is not a good candidate to
the cutoff-time, $a_n$, being $\theta$-dependent. This is the reason
why we preferred to split the diffusion inside \Ant{} in two parts: the
hitting of $A_{n,1}$, a subset of \Omgn{} such that
$\pin(A_{n,1})$ is non-vanishing in $n$, 
and the diffusion time once $A_{n,1}$ is reached,
see~\eqref{eq:92}.

\begin{rmk}
\label{rmk:1.2}
  There is no universal choice for the family \Ant, multiple
  definitions are possible and each of them affects indirectly the
  size of the cutoff-window. Remark~\ref{rmk:crucial} in
  Section~\ref{sec:if-model} shows a choice for the \Ant's which leads
  to a non optimal cutoff-window. 
  The applications presented in Section~\ref{sec:apps} also suggest
  the key to obtain an optimal cutoff-window: design the 
  family \Ant{} in such a way that the expected travelling time
  $\mean{\ztheta{1}-\ztheta{\theta}}$ is of the same order in $n$ as the
  time $\theta\delta_n$ necessary to achieve equilibrium starting anywhere
  in~$A_{n,1}$ (cfr. Corollary~\ref{crl1}). 
  From the discussion in this section, and in particular from~\eqref{eq:106}, we
  can take an energy-landscape point of view and visualize our system
  as a single well, where the height of the energy landscape 
  in a given point $i$ increases with
  $\pin^{-1}(i)$. Consider for example the Ehrenfest Urn, presented in 
  Section~\ref{sec:ehrenfest}; requiring that $\mean{\ztheta{1}-\ztheta{\theta}} =
  O(\delta_n)$ corresponds to say that, once 
  the chain has reached the \textit{border} of
  \Ant{}, it falls to the bottom of the well (that is $A_{n,1}$)
   in a time which is also sufficient to diffuse inside the well itself.
\end{rmk}

\begin{rmk}
  \label{rmk:harm}
  Note that, in the case of birth-and-death chains,
  hypotheses~\eqref{eq:13} is trivial.
\end{rmk}

\begin{rmk}
  We would like to emphasize that the task of showing the cutoff behavior is usually
  accomplished by means of a coupling argument. In most situations the
  coupling argument needs to be sufficiently fine, since the desired
  estimates are to be performed at times $a_n \pm \theta b_n$, i.e.\
  with two very different time scales involved. In our approach this
  time scale issue  is set loose 
  when we split the study of the cutoff in two phases, namely the hitting of
  $A_{n,1}$ and the subsequent evolution to equilibrium.  We
  will see later on in the applications (Section~\ref{sec:apps}) that within our
  framework only very basic and
  intuitive couplings are demanded.
\end{rmk}

\section{Proof of Main Results}
\label{sec:proof}

In the following we will make intensively use of two
easy and well known facts that are worth of a brief recalling, before
we proceed with the proof of the key results.

\begin{lmm}{(Cantelli's inequality)}\label{lemma2}
Let Y be a random variable with finite mean $\mu$ and finite variance
$\sigma^2$. Then, for any $\theta\geq 0$
  \begin{equation}
  \prob{Y-\mu \geq \theta\sigma} \leq \frac{1}{1+\theta^2}\label{eq:42}
\end{equation}
\end{lmm}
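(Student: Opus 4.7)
The plan is the classical one-shift reduction to Markov's inequality. For any real parameter $u$ with $\theta\sigma + u > 0$, the event $\{Y - \mu \geq \theta\sigma\}$ is contained in the event $\{(Y - \mu + u)^2 \geq (\theta\sigma + u)^2\}$, since adding $u$ to both sides preserves the inequality and, because the right-hand side is then positive, squaring does too. Applying Markov's inequality to the non-negative random variable $(Y - \mu + u)^2$ and using $\mean{Y - \mu} = 0$ together with $\mean{(Y - \mu)^2} = \sigma^2$ yields
\[
\prob{Y - \mu \geq \theta\sigma} \;\leq\; \frac{\mean{(Y - \mu + u)^2}}{(\theta\sigma + u)^2} \;=\; \frac{\sigma^2 + u^2}{(\theta\sigma + u)^2}.
\]

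The second step is to minimize this upper bound over $u \geq 0$, treating it as a one-variable calculus problem. Differentiating with respect to $u$ and setting the derivative to zero quickly produces the minimizer $u^{\star} = \sigma/\theta$, at which the ratio collapses to $1/(1 + \theta^2)$, which is precisely the claim. The case $\theta = 0$ is trivial because the right-hand side is $1$, and $\sigma > 0$ can be assumed without loss of generality since otherwise $Y$ is a.s.\ constant and the statement is degenerate.

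There is no real obstacle here: the entire argument consists of a shifted Markov bound together with a short one-parameter optimization. The only point requiring any genuine thought is the guess $u^{\star} = \sigma/\theta$; once this is written down the remaining algebra is essentially one line. Since Lemma~\ref{lemma2} will be invoked as a black box in the subsequent proofs of Theorem~\ref{lmm1} and Theorem~\ref{thm2}, this compact derivation is all that is needed before moving on to the substantive arguments.
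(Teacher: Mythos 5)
Your proof is correct, and it is the standard textbook derivation of Cantelli's (one-sided Chebyshev) inequality: shift by a free parameter $u$, apply Markov to the squared shifted variable, and optimize over $u$. The paper itself does not prove this lemma — it defers to Feller's book — so there is no alternative argument to compare against; your derivation is exactly the one such a reference would supply. One minor caveat worth registering: when $\sigma = 0$ and $\theta > 0$ the stated inequality is actually false as written ($\prob{Y-\mu\geq 0}=1 > \tfrac{1}{1+\theta^2}$), so the ``without loss of generality $\sigma>0$'' remark is doing real work — the bound only holds, as is conventional, for nondegenerate $Y$; this is harmless in the paper since all applications have $\sigma_n>0$.
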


\begin{lmm}\label{lemma1}
  Let $X(t)$ be a discrete Markov chain with finite space state. Then the
  total variation distance from stationarity 
  is a non-increasing sequence as a function of $t$.
\end{lmm}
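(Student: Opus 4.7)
The plan is to establish a one-step contraction: a single application of the transition kernel $P_n$ cannot increase the total variation distance to $\pi_n$, so that monotonicity follows by iteration on $t$. The key identity — which I would derive first — is $\mut{t+1} - \pin = (\mut{t} - \pin)\,P_n$, which uses exactly the stationarity of $\pin$ in the form $\pin\,P_n = \pin$. Without this invariance the target distribution would itself evolve and the argument would collapse, so it is precisely the stationarity that turns the problem into a contraction statement.

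Next, I would substitute this identity into the definition~\eqref{eq:23} of total variation distance, producing
\[
\dist{\mut{t+1}}{\pin} \;=\; \frac{1}{2}\sum_{j\in\Omgn}\Bigl|\sum_{i\in\Omgn}(\mut{t}(i)-\pin(i))\,P_n(i,j)\Bigr|.
\]
Applying the triangle inequality under the outer sum — valid because $P_n(i,j)\geq 0$, so the non-negative kernel passes through the absolute value — then swapping the order of summation and using $\sum_j P_n(i,j)=1$, one recovers the bound $\frac{1}{2}\sum_i|\mut{t}(i)-\pin(i)|=\dist{\mut{t}}{\pin}$. Iterating the resulting one-step inequality $\dist{\mut{t+1}}{\pin}\leq \dist{\mut{t}}{\pin}$ completes the proof.

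This is a textbook computation (see, e.g.,~\cite{levin2006mca}), so there is no genuine obstacle. The only point one might momentarily worry about is whether the outer triangle inequality truly yields monotonicity rather than merely a bound: it does, again because $P_n$ is non-negative. An equivalent route would be a coupling argument — run two copies of the chain from $\mut{0}$ and $\pin$, let them coalesce, and invoke the coupling inequality — but this requires constructing and analysing the coupling, whereas the three-line algebraic proof above is immediate.
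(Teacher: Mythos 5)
Your argument is correct and is the standard one-step contraction proof. The paper itself does not prove this lemma; it simply cites Jerrum's textbook~\cite{jerrum2003csa}, where essentially the same argument appears (the identity $\mu_n^{t+1}-\pi_n=(\mu_n^t-\pi_n)P_n$, then the triangle inequality over a nonnegative row-stochastic kernel). Your derivation is complete and needs no correction.
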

\noindent A proof of Lemma~\ref{lemma1} may be found in~\cite{jerrum2003csa}
and a proof of Lemma~\ref{lemma2} in~\cite{feller1968introduction}.

\mbox{}

\noindent
Now we can start with the proof of the key results.

\begin{proof}[Proof of Theorem~\ref{lmm1}]
For brevity of notation set $D(t) \equiv
\dist{\mut{t}}{\pin}$ and $\xi \equiv \tau_n-\theta\delta_n$; note
that according with the latter definition $\mean{\xi}-\theta\sigma_n =
a_n-\theta b_n$. Then, using~\eqref{eq:19}
  \begin{align}
    \mean{D(\xi)} &= \sum_{t\geq0} D(t)\prob{\xi=t \,,\, \xi \geq0} +
    D(0)\prob{\xi<0}\label{eq:36}\\
    &\leq \sum_{t\geq0} D(t)\prob{\xi=t \,,\, \xi \geq0} +\prob{\xi<0}\label{eq:38}
  \end{align}

\noindent We can estimate the sum in~\eqref{eq:38} as follows
\begin{align}
    \sum_{t\geq0} D(t)\prob{\xi=t \,,\, \xi \geq0} &\leq
    \sum_{t=0}^{\mean{\xi}-\theta\sigma_n}D(t)\prob{\xi=t
      \,,\, \xi\geq0} + \nonumber\\
    &\qquad\qquad\sum_{t\geq\mean{\xi}-\theta\sigma_n}
    D(t)\prob{\xi=t \,,\, \xi\geq0} \label{eq:39}\\
    & \leq \prob{0 \leq \xi \leq
      \mean{\xi}-\theta\sigma_n} +D(\mean{\xi}-\theta\sigma_n)\label{eq:40}
\end{align}
where from~\eqref{eq:39} to~\eqref{eq:40} we have used
Lemma~\ref{lemma1} to estimate the second sum.

\noindent Substituting equation~\eqref{eq:40} in~\eqref{eq:38} we
obtain
\begin{equation}
  \label{eq:72}
  \mean{D(\xi)} \leq\prob{\tau_n \leq
    \mean{\tau_n}-\theta\sigma_n} +  D(\mean{\xi}-\theta\sigma_n)
\end{equation}
that is
\begin{equation}
  \label{eq:41}
  \mean{\dist{\mut{\tau_n-\theta\delta_n}}{\pin}} \leq
  \dist{\mut{a_n-\theta b_n}}{\pin} + \prob{\tau_n \leq \mean{\tau_n}-\theta\sigma_n}
\end{equation}
Thus, reverting the inequality, in virtue of~\eqref{eq:3} and~\eqref{eq:42} we arrive at
\begin{equation}
  \label{eq:32}
  1 \geq \liminf_{n\to\infty} \dist{\mut{a_n-\theta b_n}}{\pin} \geq
  1-f(\theta)-\frac{1}{1+\theta^2}
\end{equation}

Now set $\eta = \tau_n+\theta\delta_n$ and notice that
$\mean{\eta}+\theta\sigma_n = a_n + \theta b_n$. Then since $\eta \geq
\theta \delta_n$, by~\eqref{eq:19} we get
\begin{align}
  \mean{D(\eta)} &= \sum_{t\geq \theta\delta_n} D(t)\prob{t = \eta}\label{eq:31}\\
  &\geq \sum_{t=\theta\delta_n}^{\mean{\eta}+\theta\sigma_n}
  D(t)\prob{t=\eta}\label{eq:35}\\
  &\geq D(\mean{\eta}+\theta\sigma_n)
  \sum_{t=\theta\delta_n}^{\mean{\eta}+\theta\sigma_n}\prob{t=\eta}\label{eq:43}\\
  & = D(\mean{\eta}+\theta\sigma_n)\prob{\eta \leq
    \mean{\eta}+\theta\sigma_n}\label{eq:44}\\
  & \geq D(\mean{\eta}+\theta\sigma_n)
  \left(1-\frac{1}{1+\theta^2}\right)\label{eq:45}\\
  & \geq D(\mean{\eta}+\theta\sigma_n) -\frac{1}{1+\theta^2}\label{eq:46}
\end{align}
where from~\eqref{eq:44} to~\eqref{eq:45} we used Lemma~\ref{lemma2}.
Reverting the inequality we obtain
\begin{equation}
  \label{eq:47}
  \dist{\mut{a_n+\theta b_n}}{\pin} \leq
  \mean{\dist{\mut{\tau_n+\theta\delta_n}}{\pin}} + \frac{1}{1+\theta^2}
\end{equation}
Therefore, in virtue of~\eqref{eq:4}
\begin{equation}
  \label{eq:48}
  0 \leq \limsup_{n\to\infty} \dist{\mut{a_n+\theta b_n}}{\pin} \leq
  g(\theta) + \frac{1}{1+\theta^2}
\end{equation}

Eventually, mark that~\eqref{eq:1} and~\eqref{eq:2}
infer~\eqref{eq:10}. Passing to the limits for $\theta$ tending to $\infty$
in~\eqref{eq:32} and~\eqref{eq:48} concludes the proof.\qed
\end{proof}

\mbox{}

\begin{proof}[Proof of Theorem~\ref{thm2}]
  Fix $\theta >1$ arbitrarily and consider $n$ sufficiently large to
  ensure~\eqref{eq:24}.
  As in the proof of Theorem~\ref{lmm1} set
  $D(t) = \dist{\mut{t}}{\pin}$ and $\xi = \ztheta{1}-\theta\dn$. By~\eqref{eq:19}
  \begin{align}
    \mean{D(\xi)}&=\sum_{t\geq0} D(t)\prob{\xi=t \,,\, \xi \geq0}
    +D(0)\prob{\xi<0}\label{eq:51}\\
    &\geq \sum_{t\geq0} D(t)\prob{\xi=t \,,\, \xi \geq0}\label{eq:52}\\
    &=\sum_{t\geq0} \prob{\xi=t \:,\: \xi
        \geq0}\frac{1}{2}\sum_{i\in\Omgn}
    \left\vert \mutx{t}{i}-\pinx{i} \right\vert\label{eq:53}\\
    &=\prob{\xi \geq0}\left[\frac{1}{2}\sum_{i\in\Omgn} \sum_{t\geq0}
    \left\vert \left(\mutx{t}{i}-\pinx{i}\right)\prob{\xi=t \:|\: \xi
        \geq0}\right\vert\right]\label{eq:54}\\
    &\geq \prob{\xi \geq0}\left[\frac{1}{2}\sum_{i\in\Omgn} 
    \left\vert \pinx{i}-\sum_{t\geq0}\mutx{t}{i}\prob{\xi=t \:|\: \xi
        \geq0}\right\vert\right]\label{eq:55}
  \end{align}
At this point we note that $\rho_n(i) = \sum_{t\geq0}\mutx{t}{i}\prob{\xi=t \:|\:
  \xi \geq0}$ is a probability distribution on \Omgn{}, for
\begin{equation}
  \label{eq:56}
  \sum_{i\in\Omgn}\rho_n(i) = \sum_{t\geq0}\prob{\xi=t \:|\: \xi
    \geq0}\sum_{i\in\Omgn}\mutx{t}{x}=1
\end{equation}
Hence using~\eqref{eq:23} we have that, for $n$ sufficiently large,
\begin{align}
  \mean{D(\xi)}&\geq \prob{\xi \geq0}
  \max_{A\subseteq\Omgn}\left[\pinx{A} -
    \sum_{t\geq0}\mutx{t}{A}\prob{\xi=t \:|\:\xi
      \geq0}\right]\label{eq:57}\\
  &\geq \prob{\xi \geq0}\left[\pinx{\Ant} -
    \sum_{t\geq0}\mutx{t}{\Ant}\prob{\xi=t \:|\:\xi
      \geq0}\right]\label{eq:58}\\
  &\geq \prob{\xi \geq0}(1-h(\theta)) -
  \sum_{t\geq0}\mutx{t}{\Ant}\prob{\xi=t \,,\,\xi\geq0}\label{eq:61}
\end{align}

We can estimate the first term of the sum in~\eqref{eq:61} by
virtue of Lemma~\ref{lemma2}: 
\begin{align}
  (1-h(\theta))\,\prob{\xi\geq0} &=
  (1-h(\theta))\,\prob{\mean{\ztheta{1}}-\ztheta{1} \leq
    \mean{\ztheta{1}}-\theta\dn}\label{eq:59}\\
  &\geq (1-h(\theta))\left(1-\frac{\text{Var}[\ztheta{1}]}{\text{Var}[\ztheta{1}]
      +(\mean{\ztheta{1}}-\theta\dn)^2}\right)
\end{align}
By~\eqref{eq:7},~\eqref{eq:8} and~\eqref{eq:12} we have that, definitively for $n \to
\infty$, $\prob{\xi\geq0}$ is greater than any
function of $\theta$ tending to one, say $1-\frac{1}{\theta}$. Thus
for $n$ sufficiently large we have that
\begin{equation}
  \label{eq:60}
   (1-h(\theta))\,\prob{\xi\geq0} \geq 1-h(\theta) - \frac{1}{\theta}
\end{equation}
Next consider the remaining term of~\eqref{eq:61}:
\begin{align}
  &\sum_{t\geq0}\mutx{t}{\Ant}\,
  \prob{\ztheta{1}-\theta\dn=t\,,\,\ztheta{1}-\theta\dn\geq0}\label{eq:62}\\
  &\leq \sum_{t\geq0}
  \prob{t\geq\ztheta{\theta}}\,\prob{\ztheta{1}-\theta\dn=t}\label{eq:63}\\
  &\leq\sum_{t=\mean{\ztheta{1}}-\theta\dn-\theta\standev{\ztheta{1}}}^{\mean{\ztheta{1}}-\theta\dn+\theta\standev{\ztheta{1}}}
  \prob{t\geq\ztheta{\theta}}\,\prob{\ztheta{1}-\theta\dn=t} +
  \frac{1}{\theta^2}\label{eq:64}\\
  &\leq \prob{\ztheta{\theta} \leq
    \mean{\ztheta{1}}-\theta\dn+\theta\standev{\ztheta{1}}}+\frac{1}{\theta^2}\label{eq:65}\\
  &=\prob{\mean{\ztheta{\theta}}-\ztheta{\theta} \geq
    2\theta\Delta_n+\theta\standev{\ztheta{1}} -
    \mean{\ztheta{1}-\ztheta{\theta}}}+\frac{1}{\theta^2}\label{eq:67}
\end{align}
Now we have to face possibly two scenarios:
\begin{enumerate}[a.]
\item $\standev{\ztheta{1}} = o(\Delta_n)$
\item $\Delta_n = o(\standev{\ztheta{1}})$ or $\Delta_n = O(\standev{\ztheta{1}})$
\end{enumerate}
In the former case we have that also
$\sigma(\ztheta{\theta})$ is $o(\Delta_n)$ in virtue
of~\eqref{eq:13}. Therefore we can rewrite 
the first term of~\eqref{eq:67} as
\begin{align}
  \prob{\mean{\ztheta{\theta}}-\ztheta{\theta} \geq
    \theta\Delta_n(2+o(1))} &\leq 
  \frac{1}{1+\frac{1}{\sigma^2(\ztheta{\theta})}\,(\theta\Delta_n(2+o(1)))^2}\label{eq:66}\\
  &\leq \frac{1}{1+\theta^2} \text{ definitively as } n\to\infty
\end{align}
In the latter case we have that $\standev{\ztheta{1}}$
satisfies an equation of the kind of~\eqref{eq:9} as well as $\Delta_n$.
Then
\begin{align}
  &\prob{\mean{\ztheta{\theta}}-\ztheta{\theta} \geq
    \theta\standev{\ztheta{1}}\left(1+\frac{\Delta_n}{\standev{\ztheta{1}}} -
      \frac{\mean{\ztheta{1}-\ztheta{\theta}}}{\theta\standev{\ztheta{1}}}\right)}\nonumber\\
  &\leq \frac{1}{1+\left(\frac{\standev{\ztheta{1}}}{\sigma(\ztheta{\theta})}\right)^2\theta^2 \left(1+\frac{\Delta_n}{\standev{\ztheta{1}}} -
      \frac{\mean{\ztheta{1}-\ztheta{\theta}}}{\theta\standev{\ztheta{1}}}\right)^2}\label{eq:68}\\
  &\leq \frac{1}{1+\theta^2 \left(1+\frac{\Delta_n}{\standev{\ztheta{1}}} -
      \frac{\mean{\ztheta{1}-\ztheta{\theta}}}{\theta\standev{\ztheta{1}}}\right)^2}\label{eq:69}
\end{align}
by virtue of~\eqref{eq:13}. 

\noindent Therefore we can infer that for $n$
sufficiently large there exists a function $f(\theta)$ tending to 0 as
$\theta\to\infty$ that satisfies~\eqref{eq:11}.\qed
\end{proof}

\begin{rmk}
  In the proof of next result, Corollary~\ref{crl1}, we will need the
  following equality
  \begin{equation}
    \lim_{M\to\infty} \prob{\ztheta{1} \geq M} = 0 \label{eq:16}
  \end{equation}
  which is an easy consequence of Lemma~\ref{lemma2} and~\eqref{eq:7}.
\end{rmk}

\begin{proof}[Proof of Corollary~\ref{crl1}]
We construct a coupling $(X_n^t,Y_n^t)$ of \mut{t} and \pin{} as follows:
\begin{enumerate}
\item set $X_n^0 \sim \mut{0}$ and $Y_n^0 \sim \pin$, and define
  $\hat{\gamma}_n = \min\{t\geq0 \::\: X_n^t=Y_n^t\}$, first coalescence time
\item for $0\leq t \leq \ztheta{1}$:
  \begin{enumerate}
  \item $X_n^t$ and $Y_n^t$ evolve independently until $\hat{\gamma}_n$, if
    $\hat{\gamma}_n < \ztheta{1}$
  \item $X_n^t = Y_n^t$~~$\forall\, \hat{\gamma}_n \leq t \leq \ztheta{1}$, if any
  \end{enumerate}
\item set $Z_n^0 = X_n^{\ztheta{1}}$ and $W_n^0 = Y_n^{\ztheta{1}}$, then for
  all $t > \ztheta{1}$ run 
  the coupling of $Z_n^t$ and $Y_n^t$ and set $(X_n^t,Y_n^t) = (Z_n^t,
  W_n^t)$.
\end{enumerate}
We have built the coupling $(X_n^t,Y_n^t)$ in this fashion
to have the following property: given that $\ztheta{1} = T <\infty$, for
all $z_0 \in A_{n,1}$
\begin{equation}
\prob{\hat{\gamma}_n > T + \theta\delta_n \,|\, X_n^{T} = z_0} = 
   \prob{\gamma_n > \theta\delta_n \,|\, Z_n^{0} = z_0} \label{eq:179}
\end{equation}
where, according to the notation introduced in Corollary~\ref{crl1}, 
$\gamma_n$ is the first coalescence time of $Z_n^t$ and $W_n^t$.
The idea is then to use the \textit{Coupling Lemma} on the coupling
$(X_n^t,Y_n^t)$ using the informations we already possess from
$(Z_n^t,Y_n^t)$, that is line~\eqref{eq:92}.
So let us take an arbitrary $M$, then
\begin{align}
  \dist{\mut{{\ztheta{1}+\theta\delta_n}}}{\pin} &=\sum_{T\geq0}
  \dist{\mut{{T+\theta\delta_n}}}{\pin} \ind{\ztheta{1} =
    T}\label{eq:97}\\
  & \leq \sum_{T=0}^M \dist{\mut{{T+\theta\delta_n}}}{\pin} \ind{\ztheta{1} =
    T} + \ind{\ztheta{1} \geq M}\label{eq:20}\\
  &\leq\sum_{T=0}^M \prob{\hat{\gamma}_n > T + \theta\delta_n \,|\, X_n^0 = x_0}\ind{\ztheta{1} =
    T} + \ind{\ztheta{1} \geq M}\label{eq:98}\\
  &= \sum_{T=0}^M \sum_{z_0 \in A_{n,1}} \Bigg[\prob{\hat{\gamma}_n > T +
    \theta\delta_n \,|\, X_n^0 = x_0, X_n^T = z_0} \nonumber\\
  & \qquad\qquad \frac{\prob{X_n^0 = x_0, X_n^T = z_0}}{\prob{X_n^0 =
      x_0}}\ind{\ztheta{1} = T} \Bigg] + \ind{\ztheta{1} \geq M}\\
  &\leq \sum_{T=0}^M \max_{z_0 \in A_{n,1}}\prob{\hat{\gamma}_n > T +
    \theta\delta_n \,|\, X_n^T = z_0}\ind{\ztheta{1} = T} + \ind{\ztheta{1} \geq M}\label{eq:126}
\end{align}
By means of ~\eqref{eq:92} and~\eqref{eq:179}
we have that for $n$ sufficiently large
\begin{equation}
  \label{eq:22}
  \dist{\mut{{\ztheta{1}+\theta\delta_n}}}{\pin} \leq g(\theta)
  \ind{\ztheta{1} \leq M} + \ind{\ztheta{1} \geq M}
\end{equation}
Finally, passing to the expectation in~\eqref{eq:22}, by means
of~\eqref{eq:16}, we get
\begin{align}
  \mean{\dist{\mut{{\ztheta{1}+\theta\delta_n}}}{\pin}}
  \leq g(\theta) \text{ definitively as } n\to\infty \label{eq:127}
\end{align}

Indentifying $\tau_n$ with $\ztheta{1}$ we have obtained~\eqref{eq:4} of
Theorem~\ref{lmm1}, while Theorem~\ref{thm2} gives
us~\eqref{eq:1}, the definition of $\delta_n$
via~\eqref{eq:12},~\eqref{eq:2} and~\eqref{eq:3}.
Therefore we have that the family of Markov chains
exhibits cutoff with $a_n=\mean{\ztheta{1}}$ and $b_n = O(2\Delta_n+3\standev{\ztheta{1}}) = O(\delta_n)$. \qed
\end{proof}

\begin{rmk}
\label{rmk:bad}
  Assume now that the state space \Omgn{} is endowed with a
  nearest-neighborhood binary relation. Such a
  relation naturally defines over $\Omgn$ a graph $G(\Omgn, E)$, and
  therefore a metric $d:\Omgn \times \Omgn \to \mathbb{N}$. 
  For any event $A\subseteq\Omgn$ it is
  then reasonable to define the set of the extremal points of $A$ as 
  \begin{equation}
    \label{eq:A1}
    \partial A = \{i \in A \::\: \exists \, j \in \Omgn \setminus A,\, d(i,j)=1 \}
  \end{equation}
  If the family of Markov chains is a nearest-neighbor dynamics, that
  is $P_{ij} = 0$ whenever $d(i,j) > 1$, we know for sure that $X_n^t$
  cannot \textit{jump} inside $A_{n,1}$ but
  is going to hit it on its border, that is
  $X_n^{\ztheta{1}} \in \partial A_{n,1}$. Thus we can ask less than~\eqref{eq:92}
  to the coupling $(Z_n^t,W_n^t)$, specifically
  \begin{equation*}
    \label{eq:174}
    \max_{z_0 \in \partial A_{n,1}} \prob{\gamma_n >
      \theta\delta_n \,|\, Z_n^0 = z_0} < g(\theta) \quad
    \text{definitively as } n \to \infty \tag{\ref{eq:92}a}
  \end{equation*}
  Also, it is not infrequent whatsoever facing Markov chains where the
  state space $\Omgn$ can be put in
  a one-to-one correspondence with a finite subset of $\mathbb{Z}$,
  then the graph $G(\Omgn, E)$ defined above is
  just a discrete segment, and 
  \begin{equation}
    \partial A = \{i \in A \::\:i+1 \not\in A \text{ or } i-1 \not\in
    A\}\tag{\ref{eq:A1}a}\label{eq:90a1}
  \end{equation}
  is composed of just two points. In those situations depending on
  $\mut{0}$ we could be able to determine which
  point of $\partial A_{n,1}$ will be hit by $X_n^t$ so that the
  \emph{max} in~\eqref{eq:174} would not be needed at all.
\end{rmk}

\section{Some Applications}
\label{sec:apps}

\subsection{The Coupon Collector Model}
\label{sec:ccm}

The Coupon Collector Model is a pure-death chain on the state space
$\Omgn=\{0,1,2,\ldots,n\}$, more specifically it is a chain with the following
transition rates:
\begin{equation}
  \label{eq:73}
  q_i = P_{i,i-1} = \frac{i}{n} \quad r_i =P_{i,i} = \frac{n-i}{n} \quad p_i = P_{i,i+1}=0
\end{equation}
This model was introduced in~\cite{erodsreny1961} and it is discussed
in many classical probability books, see e.g.~\cite{levin2006mca} and
references therein. The model can be easily accommodated in our
general framework. We give an alternative description of the
cutoff in this context by means of Theorem~\ref{lmm1}.

The chain clearly has a drift towards the state 0, for it just cannot
move to the right. The equilibrium distribution is $\pi_n =
\delta_{i,0}$, where $\delta_{i,j}$ is the usual Kronecker's delta;
the initial distribution is taken to be $\mut{0} = \delta_{i,n}$.
The hitting time of the state 0 is $\tau_n^0$, which happens to be a 
\textit{strong stationary time}. Thus, we have that for any finite time $t$
\begin{equation}
  \label{eq:74}
  \prob{X_n^t = i \:|\: t \geq \tau_n^0} = \pinx{i}
\end{equation}
Besides, to the leading order $\mean{\tau_n^0}=n\log n$ and
$\standev{\tau_n^0} = n$.

By~\eqref{eq:74}, following the same steps we made from~\eqref{eq:97}
to~\eqref{eq:22}, 
we have that for any $c \geq 0$
\begin{equation}
  \label{eq:75}
  \mean{\dist{\mut{\tau_n^0+c}}{\pin}} = 0
\end{equation}

Next, recall that $D(t)=\dist{\mut{t}}{\pin}$ and take $\xi = \tau_n^0 - 2\theta
n$ and $A=\{0\}$, then from line~\eqref{eq:57} we get
\begin{equation}
  \label{eq:78}
  \mean{D(\xi)}\geq \prob{\xi\geq0} -
  \sum_{t\geq0}\prob{X_n^t=0}\prob{\xi=t,\xi\geq0}
\end{equation}
Now
\begin{align}
  \prob{\xi\geq0}
  &= \prob{n \log n - \tau_n^0 \leq n(\log n -
    2\theta)}\label{eq:80}\\
  &\geq 1-\frac{1}{1+(\log n - 2\theta)^2}\label{eq:81}
\end{align}
and
\begin{align}
  &\sum_{t\geq0}\prob{X_n^t=0}\prob{\xi=t \,,\, \xi \geq0} \nonumber\\
  &\leq \sum_{t=n\log n - 3\theta n}^{n\log n - \theta n}\prob{t \geq
    \tau_n^0}\prob{t=\tau_n^0-2\theta n} +
  \frac{1}{\theta^2}\label{eq:83}\\
  & \leq \prob{n\log n - \tau_n^0  \geq \theta n}\label{eq:84} + \frac{1}{\theta^2}\\
  & \leq \frac{1}{1+\theta^2}+\frac{1}{\theta^2}\label{eq:85}
\end{align}
Thus, for $n$ sufficiently large, there exists a function $f(\theta)$
which tends to 0 as $\theta \to \infty$ such that
\begin{equation}
  \label{eq:86}
  \mean{\dist{\mut{\tau_n^0-\theta n}}{\pin}} \geq 1-f(\theta)
\end{equation}
and by virtue of Theorem~\ref{lmm1} we have that the coupon collector
exhibits cutoff with $a_n = \mean{\tau_n^0} = n\log n$ and $b_n = O(
\standev{\tau_n^0}) = O(n)$.

\subsection{The Top-in-at-random model}
\label{sec:tiar}

The Top-in-at-random is a card shuffling model introduced first
in~\cite{aldous1986sca} and it is the first example in which the
cutoff phenomenon has been recognized.
 The state space
\Omgn{} is the symmetric group, that is the set of all $n!$ possible
permutations of a deck of $n$ cards. The chain describing the model
evolutes according to the following shuffling procedure: pick the
first card of the deck and insert it in the deck at a position chosen
uniformly at random. The equilibrium distribution \pin{} is uniform.
Here we give a description of cutoff in this case using Theorem~\ref{thm2}.

Given the initial permutation $\rho_0$, without loss of generality we
shall imagine to relabel the cards from 1 to $n$, being 1 the bottom
card and $n$ the topmost one.
Next, consider the sets $R_\theta$ composed of those permutations
$\rho$ having the cards from 1 up to $\theta+1$ in crescent relative
order. This corresponds to say that the first \textit{rising sequence} has
length $l \geq \theta+1$, see~\cite{bayer1992tds} for the definition of rising
sequence and for its properties.
To evaluate the cardinality of $R_\theta$ we use the following
argument: given a permutation $\rho \in R_\theta$ keep fixed all the cards
displaying a face value bigger than $\theta+1$ and permute in all
possible ways the remaining.
Call $\mathcal{P}(\rho)$ the set of such permutations, its cardinality
is $(\theta+1)!$ and clearly $\mathcal{P}(\rho) \cap \mathcal{P}(\rho^\prime)
= \emptyset$ if $\rho \not= \rho^\prime$. As $\cup_{\rho\in R_\theta}
\mathcal{P}(\rho) = \Omgn$ we have obtained the following result:
\begin{equation}
  \label{eq:87}
  |R_\theta| = \frac{n!}{(\theta+1)!}
\end{equation}

Please note that $\{\rho_0\} = R_{n-1} \subset R_{n-2} \subset
\cdots\subset R_1 = \Omgn$.
Thus we define the set $\Ant = \Omgn \setminus R_\theta$, that is the
set of all permutations having the first rising sequence of length at
most $\theta$; note that~\eqref{eq:6} is fulfilled.
Define $\ztheta{\theta}$ as the hitting time of \Ant{} and
$\tau_n^\theta$ as the first time when the card $\theta$ reaches the
topmost position; $\tau_n^\theta$ can be restated as the hitting time of $B_{n,\theta}
\subset \Ant$, where $B_{n,\theta}$ is the set of all permutations in \Ant{} having the card
$\theta$ at the topmost position.
Clearly,
\begin{equation}
  \label{eq:99}
  \tau_n^{\theta+1} \leq \ztheta{\theta} \leq \tau_n^\theta \qquad
  \forall \; 1 \leq \theta \leq n-1
\end{equation}
It is easy to find that
\begin{align}
  \mean{\tau_n^\theta} &= n\log n - n\log\theta\label{eq:100}\\
  \text{Var}[\tau_n^\theta] &= \frac{n^2}{\theta} + o(n^2)\label{eq:101}
\end{align}
and therefore
\begin{equation}
  \label{eq:209}
  \mean{\ztheta{1}} = n\log n (1+o(1)) \qquad\qquad
  \mean{\ztheta{1}-\ztheta{\theta}} \leq n\log(\theta+1)
\end{equation}
Moreover, the variances present a property of monotonicity, because
$\forall\,\theta\geq1$ we have that
$\ztheta{\theta}-\tau_n^{\theta+1}$ is independent of
$\tau_n^{\theta+1}$ and $\tau_n^{\theta} - \ztheta{\theta}$ is
independent of $\ztheta{\theta}$.
Therefore,
\begin{align}
  \text{Var}[\tau_n^{\theta+1}] \leq \text{Var}[\ztheta{\theta}] \leq
  \text{Var}[\tau_n^\theta]\label{eq:88} 
\end{align}
Hence to the leading order in $n$,
\begin{align}
  \label{eq:102}
  \mean{\ztheta{1}} &= n\log n\\
  \label{eq:103}
  \standev{\ztheta{1}} &= O(n)
\end{align}

Taking $\Delta_n = n$ we find that all the hypothesis of
Theorem~\ref{thm2} are satisfied. Eventually, $\ztheta{1}$ is a strong
stationary time so that~\eqref{eq:74}-\eqref{eq:75} hold, with
$\tau_n^0$ replaced by $\ztheta{1}$; 
thus via Theorem~\ref{lmm1} the Top-in-at-random model
exhibits cutoff with $a_n = n\log n$ and $b_n = O(n)$.

\subsection{The Ehrenfest Urn model}
\label{sec:ehrenfest}

The Ehrenfest Urn model is possibly the most famous model of
diffusion. The cutoff phenomenon for this chain was first showed
in~\cite{diaconis1990asymptotic}, see also the
review~\cite{diaconis1996cpf} and the 
references therein.

In this model we have two boxes containing a total amount
of $n$ particles, each of them independently change container with probability
$\frac{1}{2n}$. If $X_n^t$ is defined as the number of balls in Urn 1
and that contains $i$ balls then the transition rates for the
Ehrenfest chain are
\begin{equation}
q_i = P_{i,i-1} = \frac{i}{2n} \quad  r_i = P_{i,i} = \frac{1}{2} \quad
p_i = P_{i,i+1} = \frac{n-i}{2n}  \label{eq:89}
\end{equation}
According to~\eqref{eq:89} the Ehrenfest chain is a lazy birth-and-death
chain on $\Omgn = \{0,1,\ldots,n\}$ and its stationary distribution is
a binomial $\mathcal{B}(n,\frac{1}{2})$.

Let us discuss the cutoff-time and the cutoff-window in this case
using the results from Section~\ref{sec:key-thm}.
A good choice for the family of nested subsets is the following:
\begin{equation}
  A_{n,\theta} = \left\{i \in \Omgn \::\:
    \left\vert i-\frac{n}{2}\right\vert\leq
    \frac{\theta}{2}\sqrt{n}\right\}\label{eq:116}
\end{equation}
since $\pi_n(A^\complement_{n,\theta})<\frac{1}{\theta^2}$ by
means of Chebyshev's inequality.
Suppose now that $\mut{0} = \delta_{i,0}$, that is at time 0 Urn 1 is empty;
plain but lenghty calculations (presented for the sake of completeness in
Appendix~\ref{app:mean-var}) 
show that, to the leading order in $n$
\begin{equation}
  \label{eq:90}
  \mean{\ztheta{1}} = \frac{1}{2}n\log n \qquad\qquad
  \mean{\ztheta{1}-\ztheta{\theta}}  = n\log\theta \qquad\qquad
  \standev{\ztheta{1}} = O(n)
\end{equation}
and therefore the hypotheses of Theorem~\ref{thm2} are fulfilled
choosing $\Delta_n = O(n)$ (recall Remark~\ref{rmk:harm}). This last choice sets $\delta_n = O(n)$
and then what we are left with is verifying that Corollary~\ref{crl1} holds.

The Lazy Ehrenfest Urn shares this
feature with the Mean-field Ising model so we defer the matter to
Section~\ref{sec:proof-cutoff} (see in particular
Remark~\ref{rmk:beta}). Eventually, we have proved that the Lazy
Ehrenfest Urn exhibit cutoff with $a_n = \frac{1}{2}n\log n$ and $b_n
= O(n)$.

\subsubsection{The Lazy Random Walk on the Hypercube}
\label{sec:rwh}

In this model the state space is a $n$-dimensional hypercube,
$\Omgn=\{0,1\}^n$; each state can be then represented as a binary $n$-tuple
$x=(x_1,\ldots,x_n)$.
Without loss of generality, let the chain be at time zero at the vertex
$(0,\ldots,0)$, then at each step we flip with probability $\frac{1}{2}$ a
component of the  tuple chosen uniformly at random. This corresponds
to the following update procedure: at each step we choose one of the
possible $n$  directions in space and move along it with probability
$\frac{1}{2}$, while with probability $\frac{1}{2}$ we stand still.
The equilibrium distribution is clearly the uniform one.

The standard treatment of this model is to project it onto a
birth-and-death chain by means of the following equivalence relation:
\begin{equation}
  \label{eq:104}
  x \sim y \qquad \text{iff} \quad \Vert x \Vert_{\ell_1} = \Vert y \Vert_{\ell_1}
\end{equation}
where $\Vert x \Vert_{\ell_1} = \sum_{i} x_i$ is the Hamming weight of
the vertex $x$. The quotient state space $\Omgn / \sim$
can be put into a one-to-one correspondence with the state space
$\Omgn^\sharp=\{0,1,\ldots,n\}$ of a new
chain $X_n^{\sharp,t}$, having transition rates given
by~\eqref{eq:89} and equilibrium distribution equal to a
binomial $\mathcal{B}\left(n,\frac{1}{2}\right)$.

Let us name $\mu_n^{\sharp,t}$ the evolute measure after $t$ steps of
the projected chain $X_n^{\sharp,t}$ and by $\pin^\sharp$ its
equilibrium distribution, then it is a standard task to shown that
\begin{equation}
  \label{eq:105}
  \dist{\mut{t}}{\pin} = \dist{\mu_n^{\sharp,t}}{\pin^\sharp}
\end{equation}
Thus the Lazy Random Walk on the Hypercube exhibits cutoff with the
same cutoff-time and cutoff-window of the Lazy Ehrenfest Urn.

\begin{rmk}
\label{rmk:entropy_driven}
  Since \pin{} is uniform the projected stationary distribution
  $\pin^\sharp(i)$ is clearly proportional to the number of vertices
  having Hamming weight equal to $i$. Therefore $\pin^\sharp$ is
  binomial and is supported in the sense of~\eqref{eq:26} on
  \Ant{}. As the configurations in \Ant{} give the leading
  contribution to the entropy of the distribution $\pin^\sharp$, we
  say that the system is \textit{entropy-driven} towards the
  stationarity. This drift ensures that the conditions of
  Theorem~\ref{thm2} hold although the original distribution on the
  hypercube cannot provide any drift, being uniform.
\end{rmk}

\subsection{Non-reversible biased random walk on a cylinder}
\label{sec:sutr-model}


Consider a family of Markov chains $\{\Omgn, X_n^t, P_n, \pi_n, \mut{t}, \mut{0}\}$ having space state
\begin{equation}
  \label{eq:1s}
  \Omgn = \{(\myh, \myp) \::\: \myh \in \{0,1,\ldots, l-1\}, \:\myp
  \in \{0,1,\ldots,m-1\}\} \qquad \text{with } |\Omgn| = n = l \cdot m
\end{equation}
As stated more precisely below, we are going to regard \Omgn{} as a
cylindrical lattice of volume $n$, having height $l$ and base
circumference of lenght $m$.
The transition kernel of the $n$-th chain is $P_n$, whose entries are given by the following transition probabilities: 
\begin{equation}
  \label{eq:2s}
  \prob{X_n^{t+1} = (\myh^\prime, \myp^\prime) \,|\, X_n^{t} = (\myh, \myp)} =
  \begin{cases}
    \frac{q}{2} & \text{if } \myp^\prime = \myp, \myh^\prime = \myh - 1 \text{ and } \myh \not= 0\\
    \frac{q}{2} & \text{if } \myp^\prime = \myp, \myh^\prime = \myh \text{ and } \myh = 0\\
    \frac{1-q}{2} & \text{if } \myp^\prime = \myp, \myh^\prime = \myh + 1 \text{ and } \myh \not= l-1\\
    \frac{1-q}{2} & \text{if } \myp^\prime = \myp, \myh^\prime = \myh  \text{ and } \myh = l-1\\
    \frac{r}{2} & \text{if } \myh^\prime = \myh, \myp^\prime = \myp + 1\mod m\\
    \frac{1-r}{2} & \text{if } \myh^\prime = \myh, \myp^\prime = \myp - 1\mod m\\
    0 & \text{otherwise}
  \end{cases}
\end{equation}
where $r$ and $q$ are any two arbitrary real numbers taken in the
interval $(\frac{1}{2}, 1)$. Let us define $\beta = \frac{2q-1}{2}$
the net \textit{vertical} drift felt by the chain.

\begin{rmk}
\label{rmk:sutrdist}
  The transition matrix~\eqref{eq:2s} induces naturally on \Omgn{} a
  graph $G(V, E)$, where $V = \Omgn$ and $\Omgn\times\Omgn \supset E =
  \{(u, v) \::\: \prob{X_n^{t+1} = u \,|\, X_n^{t} = v} > 0\}$. Such
  graph can be thought of as a cylindrical lattice of volume $n$, with $l$ layers composed of $m$ points each.
  Moreover, the neighborhood structure just highlighted introduces a metric on \Omgn{}, given by the length of the shortest path between two vertices of the graph (cfr.\ Remark~\ref{rmk:bad} above).
\end{rmk}

Each chain of the family defined above is an irreducible and aperiodic chain, thus it exists a unique invariant measure \pin{} such that $\pin = \pin \, P_n$.
Since the model has an evident radial symmetry, we expect that
\[\pin(\myh, \myp) = \pin(\myh, \myp^\prime) \quad \forall\: \myp, \myp^\prime \in \{0,1,\ldots,m-1\}\]
Thus let us look for \pin{} in the form
\begin{equation}
  \label{eq:3s}
  \pin(\myh, \myp) = f_n(\myh) \qquad \text{with } f_n(\myh+1) = \alpha f_n(\myh)
\end{equation}
By definition of \pin{} and~\eqref{eq:2s} we have that, for $\myh\not=0,l-1$,
\[\pin(\myh,\myp) = \frac{1-r}{2}\pin(\myh,\myp+1\!\!\!\mod m) + \frac{r}{2}\pin(\myh,\myp-1\!\!\!\mod m) + \frac{q}{2}\pin(\myh+1,\myp) + \frac{1-q}{2}\pin(\myh-1,\myp)\]
which, by virtue of~\eqref{eq:3s}, yields
\begin{equation}
  \label{eq:4s}
  \alpha = 1 \qquad \text{and} \qquad \alpha = \frac{1-q}{q}
\end{equation}
The value of $\alpha$ to be taken is $\alpha = \frac{1-q}{q}$ since it satisfies $\pin = \pin \, P_n$ also for $\myh=0$ and $\myh=l-1$. Thus,
\begin{equation}
  \label{eq:5s}
  \pin(\myh,\myp) = \alpha^{\myh} f_n(0)
\end{equation}
The value of the normalization constant $f_n(0)$ is found via normalization:
\begin{equation}
  \label{eq:6s}
  f_n(0) = \pin(0, \myp) = \frac{2q-1}{m\,q\,(1-\alpha^l)} \simeq \frac{2q-1}{m\,q}
\end{equation}
where last approximation holds for sufficiently large $l$.


Given a state $\Omgn \ni u = (\myh^\prime, \myp^\prime)$, with an abuse of notation
we will denote as $\myh(u)$ and $\myp(u)$ its height, $\myh^\prime$, and its
position on the $\myh^\prime$-th layer, $\myp^\prime$, respectively.

\mbox{}

Consider now the following equivalence relation between any two states $u,v \in \Omgn$
\[u \sim v \quad \Longleftrightarrow \quad h(u) = h(v)\]
The lumped chain, $X_n^{\sharp,t}$, defined on the state space
$\Omgn^\sharp = \{0,1,\ldots,l-1\}$ with transition matrix entries given by
\begin{equation}
  \label{eq:8s}
  P_n^\sharp(i,j) =
  \begin{cases}
    \frac{1}{2} & \text{if } i=j \text{ and } i \not= 0, l-1\\
    \frac{1+q}{2} & \text{if } i=j=0\\
    \frac{2-q}{2} & \text{if } i=j=l-1\\
    \frac{q}{2} & \text{if } j=i-1 \text{ and } i\not=0\\
    \frac{1-q}{2} & \text{if } j=i+1 \text{ and } i\not=l-1\\
    0 & \text{otherwise}
  \end{cases}
\end{equation}
is a projection of $X_n^t$ according to the equivalence relation
$\sim$. The stationary measure $\pin^\sharp(x)$ of the lumped chain is
then found summing $\pin(u)$ over the elements $u$ that belong to the
equivalence class~$[x]$.  
Since every equivalence class (i.e.\ every layer) contains exactly $m$ points:
\begin{equation}
  \label{eq:9s}
  \pin^\sharp(x) \simeq \frac{2q-1}{q}\left(\frac{1-q}{q}\right)^x \qquad x\in\{0,1,\ldots,l-1\}
\end{equation}

\begin{rmk}
The stationary measure $\pin^\sharp$ is obviously reversible with
respect to $P_n^\sharp$ but this property does not hold for the
original chain $X_n^t$, whose equilibrium measure is not reversible
w.r.t.\ $P_n$.
To see this it suffices to take any two states $u, v \in \Omgn$ such that $\myh(u) = \myh(v)$
and $|\myp(u) - \myp(v)| = 1$; then by~\eqref{eq:3s} $\pin(u) =
\pin(v)$ but according to~\eqref{eq:2s} $P(u,v) \not= P(v,u)$.
\end{rmk}

\begin{rmk}
  We have introduced the lumped chain, $X_n^{\sharp,t}$, since it can
  be coupled to $X_n^t$ in such a way that
  \[h(X_n^t) = X_n^{\sharp, t} \qquad \forall\; t\geq0\] Therefore we
  can study the \emph{hitting time of any layer} considering a
  one-dimensional chain only. Nevertheless we want to stress that the
  study of the cutoff phenomenon for $X_n^t$ cannot be reduced to
  the study of the cutoff for $X_n^{\sharp,t}$, since in general the
  identity~\eqref{eq:105} won't hold. 
  Let us consider, indeed, the initial distribution $\mu_n^0 =
  \delta_{u,u_0}$ with $h(u_0) = l-1$, which represents the worst case
  scanario for the behavior of the total variation
  distance. Then~\eqref{eq:105} is false for any finite $t$ but, as we will see, by means
  of Theorem~\ref{thm2} and Corollary~\ref{crl1} it is possible to
  prove cutoff with relative ease.
\end{rmk}

Define now the following family of sets
\begin{equation}
  \label{eq:10s}
  \Ant = \left\{u \in \Omgn \::\: h(u) < \sqrt{\theta}\right\}
\end{equation}
with this definition \Ant{} is the union of the $\sqrt{\theta}$ bottom
layers and $A_{n,1}$ is just the bottommost layer. 
The hitting time \ztheta{\theta} of the set \Ant{} has the following expectation and variance:
\begin{align}
  \label{eq:11s}
  \mean{\ztheta{\theta}} &= \sum_{k=\sqrt{\theta}+1}^l
  \mean{\zeta_{k\to k-1}} = \sum_{k=\sqrt{\theta}+1}^l \frac{2}{q}
  \sum_{i=k}^l \frac{\alpha^i}{\alpha^k}= \beta^{-1}  (l -
  \sqrt{\theta}) + O_\theta(\alpha^l)\\
  \text{Var}\left[\ztheta{\theta}\right] &= \sum_{k=\sqrt{\theta}+1}^l \frac{2}{q}
  \sum_{i=k}^l (2\mean{\zeta_{i\to k-1}}- \mean{\zeta_{k\to
      k-1}})\frac{\alpha^i}{\alpha^k} - \mean{\zeta_{k\to k-1}}= O_\theta(l)
\end{align}
where $\zeta_{i\to j}$ is the first visit time of the state $j$
starting from the state $i$ and $O_\theta( \cdot )$ means $O( \cdot )$
for any fixed value of $\theta$. 

To use Theorem~\ref{thm2} we want to study the behavior of these
quantities in the limit for $n \to \infty$ but $n = l \cdot m$, thus
we can let the volume of the cylinder grow by extending its height or
enlarging its diameter or letting both grow simultaneously. To this
extent let us consider the case where
\begin{equation}
  \label{eq:12s}
  m = n^\omega \quad \text{and}\quad l = n^{1-\omega}
  \qquad\qquad\text{with }\quad \omega>0
\end{equation}
 With the usual notation take $\Delta_n = m^2 = n^{2\omega}$, this
 choice fulfills all the hypothesis of Theorem~\ref{thm2}
 (namely~\eqref{eq:8} and~\eqref{eq:9}) and eventually sets the candidate cutoff-window order to
 \begin{equation}
   \label{eq:13s}
   \delta_n = O(m^2 + \sqrt{l}) = O(n^{2\omega} + n^{\frac{1-\omega}{2}})
\end{equation}


All we are left to deal with is then the existence
(cfr. Corollary~\ref{crl1}) of a coupling $(Z_n^t, W_n^t)$ such that, with $Z_n^0$ located
on a point of the bottommost layer (that is $\myh(Z_n^0) = 0$) and
$W_n^0 \sim \pin$ (i.e.\ $h(W_n^0) \geq 0$ and distributed exponentially), we have
\begin{align}
  \label{eq:14s}
  \lim_{\theta\to\infty}\lim_{n\to\infty}\prob{\gamma_n > \theta\delta_n} = 0
\end{align}
where $\gamma_n = \min\{t\geq0 \::\: Z_n^t = W_n^t\}$ is the coalescence time.

Consider the distance (Cfr. Remark~\ref{rmk:sutrdist}) between $Z_n^t$ and $W_n^t$:
\begin{equation}
  \label{eq:15s}
  D_n^t = |\myh(Z_n^t) - \myh(W_n^t)| + \min\{|\myp(Z_n^t)-\myp(W_n^t)|, m-|\myp(Z_n^t)-\myp(W_n^t)| \}
\end{equation}
It exists a coupling $(Z_n^t, W_n^t)$, sketched for reference Figure~\ref{fig:cpl}, such that
\begin{enumerate}
\item $\myH_n^t = |\myh(Z_n^t) - \myh(W_n^t)|$ is a death-only
  chain on the segment $\{0,1,\ldots, l-1\}$, that is to say
  $H_n^{t+1} \leq H_n^t$
\item $\myH_n^s = 0$ for any $s\geq\gamma_n^\myH = \min\{t\geq0 \::\: \myH_n^t
  = 0\}$
\item\label{pt3} the random time $\gamma_n^\myH$ satisfies $\gamma_n^\myH = \min\{t\geq 0 \::\: h(W_n^t) =
  0\}$
\item\label{pt2} $\myP_n^t = \min\{|\myp(Z_n^t)-\myp(W_n^t)|,
  m-|\myp(Z_n^t)-\myp(W_n^t)| \}$ is a symmetric $r$-lazy random walk
  on the segment $\{0,1,\ldots,\lceil\frac{m}{2}\rceil\}$
\item $\myP_n^s = 0$ for any $s\geq\gamma_n^\myP = \min\{t\geq0 \::\: \myP_n^t
  = 0\}$
\end{enumerate}

\begin{figure}[htbp]
\begin{center}
\includegraphics[width=0.75\textwidth]{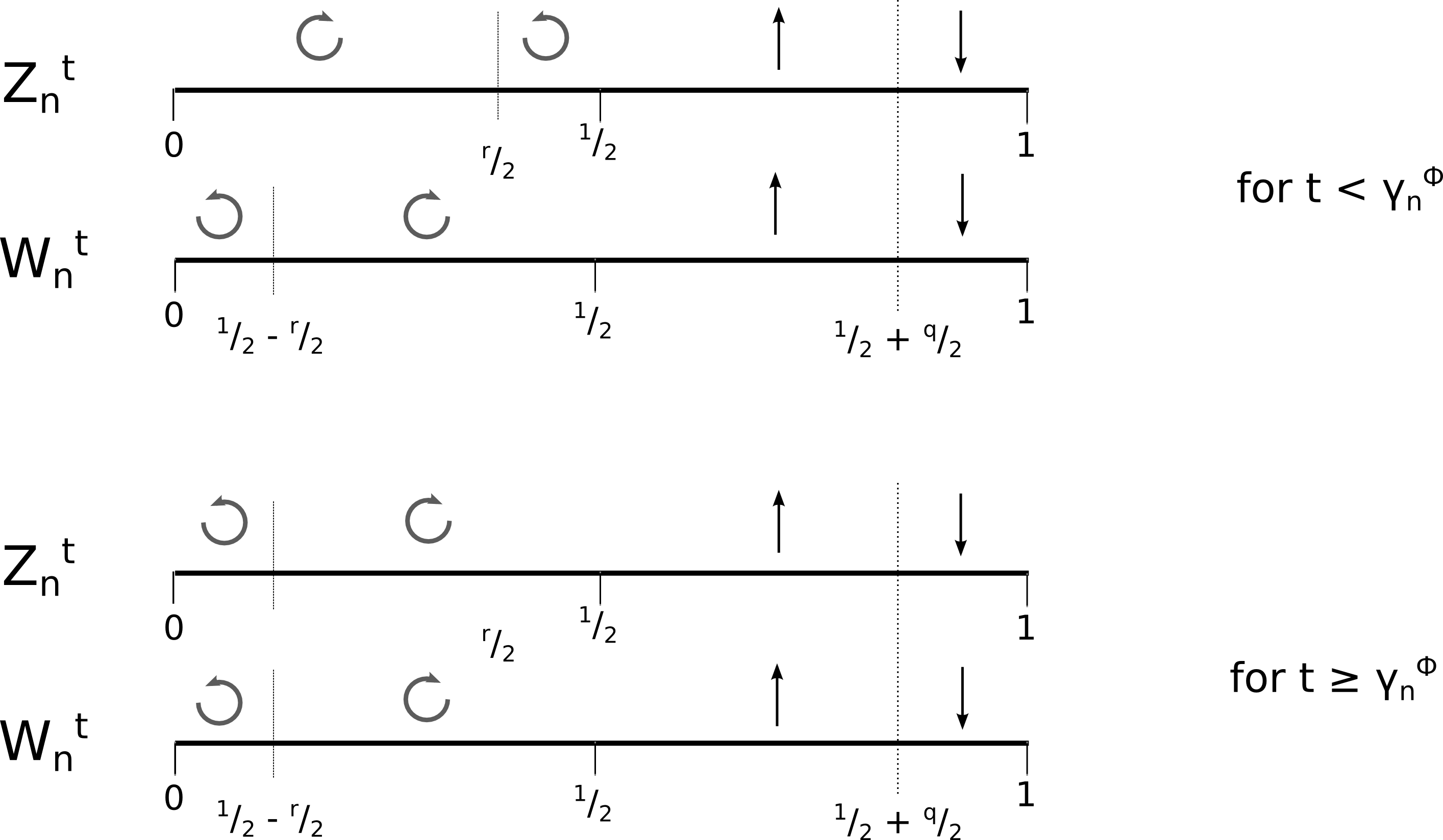}
\caption{Coupling scheme, the same random update is used for both
  $Z_n^t$ and $W_n^t$. The two copies have the same probability
  to move to the upper or lower layer, except when one of the chains
  is on the topmost or bottommost layer. In the latter case the distance
  $\myH_n^t$ has probability $\frac{q}{2}$ to reduce by 1 while in the
  former it has probability $\frac{1-q}{2}$.}
\label{fig:cpl}
\end{center}
\end{figure}
From the description of our coupling it should be clear that
\begin{equation}
  \label{eq:16s}
  \gamma_n = \max\{\gamma_n^\myH, \gamma_n^\myP\} \leq \gamma_n^\myH + \gamma_n^\myP
\end{equation}
Thus, using Markov's inequality we get
\begin{align}
  \prob{\gamma_n \geq \theta\delta_n} &\leq
  \frac{\mean{\gamma_n}}{\theta\delta_n} \leq \frac{\mean{\gamma_n^\myH} +
    \mean{\gamma_n^\myP}}{\theta\delta_n} \label{eq:18s}
\end{align}
Now, according to point~\ref{pt3} listed above and the transition
probabilities of $W_n^{\sharp,t} = h(W_n^t)$ we easily obtain
\begin{equation}
  \label{eq:19s}
  \mean{\gamma_n^\myH \,\left|\, h(W_n^0) = h^\prime\right.} = \beta^{-1} h^\prime
\end{equation}
which yields to
\begin{equation}
  \label{eq:20s}
  \mean{\gamma_n^\myH} = \beta^{-1} \mean{h(W_n^0)} =
  \beta^{-1}\sum_{x} x\,\pin^\sharp(x) \leq \beta^{-1} \frac{1-q}{2q-1}
\end{equation}
According to point~\ref{pt2} listed above we get
\begin{equation}
  \label{eq:21s}
  \mean{\gamma_n^\myP} \propto m^2
\end{equation}
Lines~\eqref{eq:20s} and~\eqref{eq:21s} clearly infer~\eqref{eq:14s} and
the proof is complete: the model exhibit cutoff at cutoff-time 
\begin{equation}
  \label{eq:22s}
  a_n = \beta^{-1} l = \beta^{-1} n^{1-\omega}
\end{equation}
and cutoff-window
\begin{equation}
  \label{eq:23s}
  b_n = O(m^2 + \sqrt{l}) = O\left(n^{2\omega} + n^{\frac{1-\omega}{2}} \right)
\end{equation}

\begin{rmk}
  The condition $\frac{b_n}{a_n} = o(1)$ is fulfilled only if $\omega
  < \frac{1}{3}$. Within this constraint we have cutoff and the
  cutoff-window shows the following behavior:
  \begin{align*}
    0 < \omega \leq \frac{1}{5} &\qquad b_n =
    O\left(n^{\frac{1-\omega}{2}}\right)\\
    \frac{1}{5} \leq \omega < \frac{1}{3} &\qquad b_n =
    O\left(n^{2\omega}\right)\\
  \end{align*}
  and we see that the value $\omega = \frac{1}{5}$ gives the smallest
  cutoff-window order achievable.
\end{rmk}
\begin{rmk}
  The case $\omega=0$ corresponds to an increase of the cylinder volume
  by extending its height while keeping fixed its base diameter, and it
  is almost identical to a biased random walk on a segment~\cite[\S 18.2.1]{levin2006mca}. In
  this sense the general case $\omega > 0$ represents a non-reversible
  higher-dimensional extension of the biased random walk.
\end{rmk}

\subsection{A partially-diffusive random walk}
\label{sec:if-model}

Fix $\varepsilon \in (0,\frac{1}{2})$ and consider the birth-and-death
chain $X_n^t$ defined on the state space $\Omega_n = \{0,1,\ldots,n\}$
with initial position $X_n^0 = n$ and transition rates
\begin{align}
  \label{eq:53a}
  q_i = P_{i,i-1} & =
  \begin{cases}
    \frac{i}{2n} & \quad \text{if } n^\varepsilon < i \leq n\\
    \frac{1}{2}& \quad \text{if } 1 \leq i \leq n^\varepsilon
  \end{cases}\\
  r_i = P_{i,i} &=
  \begin{cases}
    1-p_i-q_i & \quad \text{if } n^\varepsilon\leq i \leq n\\
    0 & \quad \text{if } 0 \leq i < n^\varepsilon
  \end{cases}\\
  p_i = P_{i,i+1} & =
  \begin{cases}
    \frac{i}{4n} & \quad \text{if } n^\varepsilon \leq i \leq n\\
    \frac{1}{2} & \quad \text{if } 0 \leq i < n^\varepsilon
  \end{cases}\label{eq:54a}
\end{align}
This chain is such that outside
the interval $[0,n^\varepsilon]$ it behaves like a biased random walk
while inside the interval it behaves like an
unbiased one. 
It's quite easy to show that this model does not satisfies the
\textit{strong drift condition}, which according
to~\cite{barrera2009abrupt} is a sufficient condition to prove
cutoff, see Remark~\ref{rmk:if} below.
Using Corollary~\ref{crl1} it's easy to show that this model actually
exhibits cutoff. 

The stationary distribution \pin{} can be found by reversibility
\begin{equation}
  \label{eq:180}
  \pin(i)=
  \begin{cases}
    c & \quad\text{for } 0\leq i \leq n^\varepsilon\\
    c \, \left( \frac{1}{2} \right)^{i-n^\varepsilon} & \quad\text{for
    } n^\varepsilon < i \leq n\\
  \end{cases}
\end{equation}
where the constant $c$ is $\frac{1}{n^\varepsilon + 2} + O \left(
  \frac{1}{2 ^{n}} \right) $.
In order to use Theorem~\ref{thm2} it is enough to take the following family of
nested subsets
\begin{equation}
  \label{eq:17a}
  A_{n,\theta} = \{i \::\: 0 \leq i \leq n^\varepsilon
  \theta^{n^{2\varepsilon-1}}\}
\end{equation}
With this choice~\eqref{eq:6} holds and, to the
leading order in $n$
\begin{equation}
  \mean{\ztheta{1}} = \frac{2(1-\varepsilon)}{\log2} \: n\log n \qquad\quad
  \mean{\ztheta{1} -\ztheta{\theta}} = \frac{2}{\log2} n^{2\varepsilon}\log
  \theta \label{eq:60a}\\ 
\end{equation}
see Appendix~\ref{app:mean-if} for the details of the calculations.
Choosing $\Delta_n = n^{2\varepsilon}$ we
verify~\eqref{eq:8} and~\eqref{eq:9}, then by
Remark~\ref{rmk:harm} we know that all the hypotheses hold except possibly~\eqref{eq:7}.

Now we consider a coupling $(Z_n^t, W_n^t)$, where
$Z_n^t$ and $W_n^t$ are two copies of $X_n^t$ with
initial positions $Z_n^0 = n^\varepsilon$ and $W_n^0 \sim \pin$
respectively; then, provided that the two chains have not yet
collided, at each time we let the two copies evolve independently. 
Let $\gamma_n = \min\{t\geq 0 \::\: Z_n^t=W_n^t\}$ be the coalescence
time and set $Z_n^t = W_n^t$ for any $t \geq \gamma_n$, 
then 
\begin{align}
  \prob{\gamma_n > t \,|\, Z_n^0 = n^\varepsilon} 
  &= \prob{\gamma_n > t \,|\, Z_n^0 = n^\varepsilon, W_n^0 \leq
    n^\varepsilon}\prob{W_n^0 \leq n^\varepsilon} + \nonumber\\
  &\quad \prob{\gamma_n > t \,|\, Z_n^0 = n^\varepsilon, W_n^0 >
    n^\varepsilon}\prob{W_n^0 > n^\varepsilon}\label{eq:128}\\
  &\leq \prob{\gamma_n > t \,|\, Z_n^0 = n^\varepsilon, W_n^0 \leq
    n^\varepsilon} + \frac{1}{n^\varepsilon}\label{eq:130}
\end{align}
Let $\tau_n^0 =\min\{t\geq 0 \::\: Z_n^t = 0\}$. Clearly,
\begin{align}
  \prob{\gamma > t \,|\, Z_n^0 = n^\varepsilon, W_n^0 \leq
    n^\varepsilon} &\leq
  \prob{\tau_n^0 > t \,|\, Z_n^t = n^\varepsilon}\label{eq:69a}\\
  &\leq \frac{\mean{\tau_n^0\,|\, Z_n^0 = n^\varepsilon}}{t}\label{eq:72a}
\end{align}
where the last inequality comes from Markov's inequality.
Take  $t=\theta n^{2\varepsilon}$, since $\mean{\tau_n^0\,|\,Z_n^0 = n^\varepsilon} =
n^{2\varepsilon}+O(n^\varepsilon)$, by Remark~\ref{rmk:bad}
\begin{equation}
  \label{eq:199}
  \max_{z_0 \in A_{n,1}}\prob{\gamma_n > \theta n^{2\varepsilon} \,|\, Z_n^0 = z_0} \leq
   \frac{2}{\theta} \text{ definitively as } n\to\infty \tag{\ref{eq:92}a}
\end{equation}

The standard deviation of $\ztheta{1}$ is
$O(n^{1-\frac{\varepsilon}{2}})$ (see Appendix~\ref{app:mean-if}),
therefore \eqref{eq:7}~holds and,
with respect to the coupling defined above,~\eqref{eq:92}
follows from~\eqref{eq:128}-\eqref{eq:72a} with $t=\theta\delta_n =
2\theta\left(n^{2\varepsilon} + n^{1-\frac{\varepsilon}{2}} \right)$.
Thus, by means of Theorem~\ref{thm2} and Corollary~\ref{crl1}
we have that this model exhibits cutoff with cutoff-time 
\begin{equation}
a_n=\mean{\ztheta{1}} = \frac{2(1-\varepsilon)}{\log2} \: n\log n\label{eq:93a}
\end{equation}
and cutoff window
\begin{equation}
  \label{eq:183}
  b_n= 
  \begin{cases}
    O(n^{1-\frac{\varepsilon}{2}}) &\quad \text{if }
  0 < \varepsilon \leq \frac{2}{5}\\
  O(n^{2\varepsilon}) &\quad \text{if }
  \frac{2}{5} < \varepsilon \leq \frac{1}{2}
  \end{cases}
\end{equation}

\begin{rmk}
  From~\ref{eq:183} we see that the choice $\varepsilon = \frac{2}{5}$
  gives the smallest cutoff-window order possible.
\end{rmk}

\begin{rmk}
\label{rmk:crucial}
  This example shows how crucial is the choice of
  $\{A_{n,\theta}\}$. One could try in fact $A_{n,\theta} = \{i \::\: 0 \leq i
  \leq \theta n^\varepsilon\}$, because that scaling, linear on $\theta$,
  worked well in the lazy Ehrenfest chain. This alternative definition would lead
  to an expected travelling time $\mean{\ztheta{1}-\ztheta{\theta}} =
  n\log\theta$ and force $\Delta_n$ (and consequently $\delta_n$) 
  to be of order $n$. Since
  $\theta n$ steps are clearly sufficient for the chain started in
  $n^\varepsilon$ to achieve equilibrium, we would obtain a
  non-optimal $O(n)$ cutoff-window.
\end{rmk}

\begin{rmk}
\label{rmk:if}
  The reason why $X_n^t$ does not satisfies the strong drift
  condition is that it fails the first requirement of the
  definition, namely
  \begin{equation}
    K_q = \inf_{n \in \mathbb{N}} \inf _{0\leq i \leq n} q_i > 0\label{eq:55a}
  \end{equation}
  Nevertheless, it is clear from the results included
  in~\cite{barrera2009abrupt} that the condition $K_q >0$ can actually
  be dropped if one replaces the second condition with
  \begin{equation}
    \label{eq:56a}
    \frac{K^2_n}{K_q^n \, \mean{T^{(n)}_{n \to 0}}} \TendsTo[n, \infty] 0
  \end{equation}
  where $K_q^n = \inf\limits_{0\leq i \leq n} q_i$ and 
  \begin{equation}
    \label{eq:57a}
    K_n = \sup_{1\leq i \leq n} q_i \, \mean{T^{(n)}_{i \to i-1}} =
    \sup_{0\leq i\leq n} \frac{\pin([i,n])}{\pin(i)}
  \end{equation}
  The expected value of $T^{(n)}_{n\to0}$, the hitting time of zero
  starting from $n$, can be easily estimated as
  \begin{equation}
    \mean{T^{(n)}_{n\to0}} \leq O(n\log n + n^{2\varepsilon}) \label{eq:58a}\\
  \end{equation}
  while $K_n$ can be bounded from below by $n^\varepsilon$.  Then
  \begin{equation}
    \label{eq:59a}
    \frac{K^2_n}{K_q^n \, \mean{T^{(n)}_{n \to 0}}} \geq
    \frac{n^{2\varepsilon}}{\frac{n^\varepsilon}{2n}\left[O(n\log n +
        n^{2\varepsilon})\right]} \:\TendsTo[n,\infty] \infty
  \end{equation}
\end{rmk}

\subsection{The mean-field Ising model Glauber dynamics}
\label{sec:mean-field-ising}


The cutoff for the mean-field Ising model evolving according to the
Glauber dynamics has been recently proved
in~\cite{levin2010glauber}. Here we give an alternative proof of the
existence of the cutoff and we evaluate the cutoff-time and the
cutoff-window in terms of an hitting process by means of our
Corollary~\ref{crl1}. The computations needed to achieve this goal in
our framework are quite shortened. A generalization of this result to
the non-symmetrical case, i.e.\ when a constant magnetic field is
added, is likely to be treatable with little effort.

In the mean-field Ising model we have $n$ binary spins and a
neighborhood structure given by a complete graph $K_n$. 
$\mathcal{X}_n = \{+1,-1\}^n$ is the set of all possible
\textit{configurations}. 
The energy of a configuration $\sigma =
(\sigma_1,\sigma_2,\ldots,\sigma_n)$ is then
\begin{equation}
  \label{eq:33}
  H(\sigma) = -\frac{1}{n} \sum_{i<j} \sigma_i\,\sigma_j
\end{equation}
The Glauber dynamics for this model is defined as follows:
\begin{itemize}
\item pick up a site $i\in\{1,2,\ldots,n\}$  uniformly at random
\item update $\sigma_i$ to the values $+1$ or $-1$ respectively with
  probability
  \begin{align}
    p_+ &= \frac{e^{\beta\,S(i)}}{e^{\beta\,S(i)}+e^{-\beta\,S(i)}}\label{eq:34}\\
    p_- &= \frac{e^{-\beta\,S(i)}}{e^{\beta\,S(i)}+e^{-\beta\,S(i)}}\label{eq:107}
  \end{align}
where $S(i) = \frac{1}{n} \sum_{j\not= i} \sigma_j$ is the so-called
\textit{local field}.
\end{itemize}
The parameter $\beta$ has the physical meaning of the inverse
temperature of the system: the higher its value, the stronger
the role of the energy over the entropy in the establishment of the
equilibrium states.
The limiting case of $\beta = 0$
coincides with the lazy random walk on the hypercube seen in
Section~\ref{sec:rwh}: all the spins are updated independently and
they are equivalent from an energy-landscape point of view.

By reversibility it's easy to show that the Markov chain defined above
has a unique stationary measure
\begin{equation}
  \label{eq:109}
  \rho_n(\sigma) = \frac{e^{-\beta\,H(\sigma)}}{Z_{n,\beta}}
\end{equation}
where $Z_{n,\beta} = \sum_{\sigma^\prime\in\Omgn} e^{-\beta\,H(\sigma^\prime)}$ is the
\textit{partition function}.

Let us now define the \textit{magnetization} of a configuration as
\begin{equation}
  \label{eq:110}
  m(\sigma) = \frac{1}{2}\sum_i \sigma_i
\end{equation}
Please note that this is not the standard definition of magnetization,
since the one just defined in~\eqref{eq:110} takes values in 
$\Omgn=\{-\frac{n}{2},-\frac{n}{2}+1,\ldots,
\frac{n}{2}-1,\frac{n}{2}\}$ while in general $m\in[-1,1]$. 
We chose this definition because we want to reduce our system to a
birth-and-death chain.
We can rewrite the Hamiltonian~\eqref{eq:33} in terms of $m(\sigma)$
as follows:
\begin{align}
  m^2(\sigma) &= \frac{1}{4} \left( \sum_i \sigma_i \right) \, \left(
    \sum_j \sigma_j \right)
  = \frac{n}{4} - \frac{n}{2}H(\sigma)\label{eq:117}
\end{align}
and then
\begin{equation}
  \label{eq:118}
  H(m(\sigma)) = -\frac{2m^2(\sigma)}{n} + \frac{1}{2}
\end{equation}
The stationary distribution and the update probabilities take now the
form
\begin{align*}
  &\rho_n(m(\sigma)) = \frac{e^{\frac{2\beta}{n}
      m^2(\sigma)}}{Z^\prime_{n,\beta}} \tag{\ref{eq:109}a}\label{eq:119}\\
 &p_+  = \frac{ e^{\frac{2\beta}{n}( m(\sigma) - \sigma_i )}
      }{e^{\frac{2\beta}{n}( m(\sigma) - \sigma_i
          )}+e^{-\frac{2\beta}{n}( m(\sigma) - \sigma_i )}} 
  = \frac{1}{1+e^{-\frac{4\beta}{n}( m(\sigma) - \sigma_i )}}
  \tag{\ref{eq:34}a}\label{eq:120a}\\ 
&p_-  = \frac{ e^{-\frac{2\beta}{n}( m(\sigma) - \sigma_i )}
      }{e^{\frac{2\beta}{n}( m(\sigma) - \sigma_i
          )}+e^{-\frac{2\beta}{n}( m(\sigma) - \sigma_i )}} 
  = \frac{1}{1+e^{\frac{4\beta}{n}( m(\sigma) - \sigma_i )}}
  \tag{\ref{eq:107}a}\label{eq:121a}
\end{align*}

Let us now define the \textit{magnetization chain}, that is a new
birth-and-death chain $X_n^t$ with state space given
$\Omgn=\{-\frac{n}{2},-\frac{n}{2}+1,\ldots,
\frac{n}{2}-1,\frac{n}{2}\}$ and transition rates
\begin{align}
  p_k &= P_{k,k+1} =
  \frac{\frac{n}{2}-k}{n}\,\frac{1}{1+e^{-\frac{4\beta}{n}( k + 1 )}}\label{eq:120}\\
  q_k &= P_{k,k-1} =
  \frac{\frac{n}{2}+k}{n}\,\frac{1}{1+e^{\frac{4\beta}{n}( k -
      1 )}}\label{eq:123}\\ 
  r_k &= P_{k,k} = \frac{1}{2} + \frac{k}{n} \tanh\left(
    \frac{4\beta}{n}(k+1) \right)\label{eq:124} 
\end{align}
Using standard techniques it is possible to show that the
magnetization chain is actually 
the projection of the Glauber chain according to the following
equivalence relation
\begin{equation}
  \label{eq:132}
  \sigma \sim \sigma^{\prime} \quad\Longleftrightarrow\quad
  m(\sigma)=m(\sigma^{\prime})
\end{equation}
see for example~\cite[Thm.~5.1.4.1]{presutti2008scaling}.

Consider the Glauber chain, started say with initial distribution
$\lambda_n^0$ on $\mathcal{X}_n$ such that $\lambda_n^0(\sigma)
= \lambda_n^0(\sigma^\prime)$ whenever $\sigma\sim\sigma^\prime$. 
Along with this process take its projection, the magnetization
chain, that has initial distribution $\mu_n^0$ and stationary measure
\pin{} equal to
\begin{align}
  \label{eq:133}
  \mu_n^0(k) &= \sum_{\sigma\::\:m(\sigma)=k}\lambda_n^0(\sigma)\\
  \pi_n(k) &= \sum_{\sigma\::\:m(\sigma)=k}\rho_n(\sigma) = \frac{e^{
      \frac{2\beta k^2}{n} }}{Z_{n,\beta}} {n \choose 
  \frac{n}{2}+k}\label{eq:131}
\end{align}
It is not difficult whatsoever to prove that $\lambda_n^0(\sigma) =
\lambda_n^0(\sigma^\prime)$ for $\sigma\sim\sigma^\prime$ leads to 
$\lambda_n^t(\sigma) = \lambda_n^t(\sigma^\prime)$ for any $t\geq 0$,
which in turn infers that
\begin{equation}
  \label{eq:135}
  \dist{\lambda_n^t}{\rho_n} = \dist{\mu_n^t}{\pin} \quad \forall\,t\geq0
\end{equation}
In other words, the Glauber chain exhibit cutoff if and only if the
magnetization chain does.

\subsubsection{Analysis of $\pi_n(k)$}
\label{sec:analysis-pin}

Fix $\theta\geq1$ and define 
\begin{equation}
\label{eq:136}
\Ant = \left\{
k \in \Omgn \::\: -\theta\sqrt{\frac{n}{1-\beta}} \leq k 
\leq \theta\sqrt{\frac{n}{1-\beta}} \right\}
\end{equation}
For $k \in \Ant$ we can estimate $\pin(k)$ by means of
Stirling's formula:
\begin{align}
  {n \choose \frac{n}{2}+k} 
  &=\frac{2^{n+\frac{1}{2}}}{\sqrt{\pi n\left(1-\frac{4k^2}{n^2}\right)}}
  \frac{(1+O\left(n^{-1}\right))}{\left(1+\frac{2k}{n}\right)^{\frac{n}{2}\left(1+\frac{2k}{n}\right)}
  \left(1-\frac{2k}{n}\right)^{\frac{n}{2}\left(1-\frac{2k}{n}\right)}}
\label{eq:138}
\end{align}
Next we pass to the \emph{log} and use its analytic expansion to get
\begin{align}
  \log \frac{1}{\left(1+\frac{2k}{n}\right)^{\frac{n}{2}\left(1+\frac{2k}{n}\right)}
  \left(1-\frac{2k}{n}\right)^{\frac{n}{2}\left(1-\frac{2k}{n}\right)}}
  =-\frac{n}{2}\left[\sum_{i\geq1}\left(\frac{2k}{n}\right)^{2i}
\frac{1}{2i^2-i}\right]\label{eq:140}
\end{align}
Therefore for $k \in \Ant$ we have
\begin{align}
  \pi_n(k) 
  &=\frac{2^{n+\frac{1}{2}}}{Z_{n,\beta}} \sqrt{ \frac{1}{\pi
    n } } e^{ -\frac{2(1-\beta)}{n}
  k^2} (1+O\left(n^{-1}\right))\label{eq:142}
\end{align}
that is $\pi_n(k)$ is very close to a Gaussian distribution
$\mathcal{N}\left(0,\frac{1}{2}\sqrt{\frac{n}{1-\beta}}\right)$ for 
$k\in\Ant$. This means that~\eqref{eq:6} holds, because there exists a
positive constant $c_\beta$ such that, for $n$ sufficiently large
\begin{equation}
  \pin\left(\Ant^\complement\right) < \frac{c_\beta}{\theta^2}\label{eq:143}
\end{equation}
\begin{rmk}
  Note that in this model the Gaussian structure of $\pin$ is given by
  both energy and entropy contribution, merging in the expression of
  the free-energy, which can be recognized as the exponent of
  $e^{-\frac{2(1-\beta)}{n} k^2}$ divided by $\beta$. Hence in this
  case we will say that the cutoff is \textit{free-energy driven}.
\end{rmk}

\subsubsection{Proof of cutoff}
\label{sec:proof-cutoff}

Now suppose the Glauber chain
is started at time $0$ with magnetization $\frac{n}{2}$, that is
$\lambda_n^0 = \delta_{\sigma, 1}$ and $\mut{0} = \delta_{i,\frac{n}{2}}$;
this choice gives equal probability to equivalent configurations, 
then~\eqref{eq:135} holds. 
As usual define $\ztheta{\theta}$ as the
hitting time of \Ant{} and \ztheta{1}{} as the hitting time of $A_{n,1}$.
Lengthy but straightforward calculations (deferred to
Appendix~\ref{app:mean-var}) show that, to the leading order in $n$
\begin{align}
  &\mean{\ztheta{1}} = \frac{1}{2(1-\beta)}n\log n \label{eq:144}\\
  &\mean{\ztheta{1}-\ztheta{\theta}} = (1+\log\theta) \, O(n)\label{eq:145}
\end{align}
and that $\text{Var}[\ztheta{1}]$ grows at most as
$O(n^2)$; Therefore hypotheses~\eqref{eq:7}-\eqref{eq:16} are satisfied.
Moreover, choose $\Delta_n = O(n)$, $\delta_n$ is now of order $n$ and
both~\eqref{eq:8} and~\eqref{eq:9} are fulfilled, so that
Theorem~\ref{thm2} gives us~\eqref{eq:11}. 
Then we are left to prove that with $\delta_n = O(n)$
Corollary~\ref{crl1} holds.
\begin{rmk}
  \label{rmk:beta}
  Since for $\beta = 0$ the magnetization chain reduces to the
  Ehrenfest chain, the following estimates hold as well for the
  Ehrenfest Urn model presented in Section~\ref{sec:ehrenfest}.
\end{rmk}

To prove Corollary~\ref{crl1} consider the following coupling, $(Z_n^t, W_n^t, Z_n^{+,t},
Z_n^{-,t})$ where each component is a copy of the magnetization chain and
\begin{equation}
  \label{eq:122}
  \begin{array}{lcl}
    Z_n^0 = z_0 \equiv \frac{1}{2}\sqrt{\frac{n}{1-\beta}} & \quad
    &Z_n^{+,t} = z_0^+ \equiv
    z_0 \theta^{\frac{1}{3}}\\
    W_n^0 \sim \pin &\quad& Z_n^{-,t} = z_0^- \equiv
    -z_0 \theta^{\frac{1}{3}}
  \end{array}
\end{equation}
for a given fixed $\theta > 1$.
Let any of the four chains move according with the same transition
probabilities and using the same 
i.i.d.\ random update $u\sim U(0,1)$. 
To illustrate the transition probabilities let us consider for
instance the chain $Z_n^t$ and
suppose that at time $t$ we have $Z_n^t=k$, then
\begin{align*}
  \text{if } k\geq0 & \qquad
  \begin{cases}
    Z_n^{t+1} = Z_n^t + 1 & \text{if } 0\leq u < p_k\\
    Z_n^{t+1} = Z_n^t & \text{if } p_k \leq u \leq 1-q_k\\
    Z_n^{t+1} = Z_n^t - 1 & \text{if } 1-q_k < u \leq 1
  \end{cases}\\
  \text{if } k<0 & \qquad
  \begin{cases}
     Z_n^{t+1} = Z_n^t - 1 & \text{if } 0\leq u < q_k\\
    Z_n^{t+1} = Z_n^t & \text{if } q_k \leq u \leq 1-p_k\\
    Z_n^{t+1} = Z_n^t + 1 & \text{if } 1-p_k < u \leq 1
  \end{cases}
\end{align*}
The restriction of the coupling defined above to its first two
components, $Z_n^t$ and $Y_n^t$, is the coupling we are going to consider for
Corollary~\ref{crl1}. Thus we define $\gamma_n=\min\{t\geq0
\,:\, Z_n^t = W_n^t\}$ and recall Remark~\ref{rmk:bad}.

  By a careful analysis of~\eqref{eq:120}-\eqref{eq:124} (noticing, in particular,
  that $r_k \geq \frac{1}{2}$ and $p_k = q_{-k}$) such a scheme ensures
  that any two components of the coupling mantain their relative
  partial order undergoing a single-step transition, and indeed it is impossible that
  two chains at distance $1$ will undergo a one step transition that
  would change their relative order.

Hence the evolution scheme described
above has the following sandwiching properties 
\begin{enumerate}
\item $Z_n^{+,t} = -Z_n^{-,t}$
\item $Z_n^{-,t} \leq Z_n^t \leq Z_n^{+,t}$
\item $Z_n^{-,t} \leq W_n^t \leq Z_n^{+,t}$ provided that $W_n^0 \in A_{n,\theta^{\frac{1}{3}}}$
\end{enumerate}

Using~\eqref{eq:143} we have
\begin{align}
   \prob{\gamma_n > t \,|\, Z_n^0 = z_0}
   &=\prob{ \gamma_n > t \,\left|\,  Z_n^0 = z_0, W_n^0 \in
       A_{n,\theta^{\frac{1}{3}}} \right. }\prob{W_n^0 \in
       A_{n,\theta^{\frac{1}{3}}}} + \nonumber\\
   &\prob{ \gamma_n > t \,\left|\,  Z_n^0 = z_0, W_n^0 \not\in
       A_{n,\theta^{\frac{1}{3}}} \right.}\prob{W_n^0 \not\in
       A_{n,\theta^{\frac{1}{3}}}} \label{eq:131}\\
   &\leq \prob{ \gamma_n > t \,\left|\,  Z_n^0 = z_0, W_n^0 \in
       A_{n,\theta^{\frac{1}{3}}} \right.} +
     \frac{c_\beta}{\theta^{\frac{2}{3}}} \label{eq:132}
\end{align}
Therefore by means of the sandwiching properties stated above
\begin{align}
  \prob{ \gamma_n > t \,\left|\,  Z_n^0 = z_0, W_n^0 \in
       A_{n,\theta^{\frac{1}{3}}} \right.} \leq \prob{\tau^0_n > t \,|\,
       Z_n^{+,0} = z_0^+} \label{eq:175}
\end{align}
where $\tau_n^0 = \min\{t\geq0 \::\: Z_n^{+,t} = Z_n^{-,t} = 0\}$.
Note that $Z_n^{+,t}$ has a drift towards 0 as well as any copy of the
magnetization chain. Accordingly, it can be coupled with a lazy
uniform random walk $R_n^t$ such that
\begin{align}
  &R_n^0 = z_0^+\label{eq:176}\\
  &\prob{\tau_n^0 > t \,|\, Z_n^{+,0} = z_0^+} \leq \prob{\tilde{\tau}_n^0  > t
    \,|\, R_n^{0} = z_0^+}\label{eq:177}
\end{align}
where $\tilde{\tau}_n^0 = \min\{t\geq0 \::\: R_n^t = 0\}.$ Now we can use the following
estimate, which is a classical result for random walks
\begin{align}
  \prob{\tilde{\tau}_n^0  > t
    \,|\, R_n^{0} = z_0^+} \leq \frac{c\,z_0^+}{\sqrt{t}}\label{eq:178}
\end{align}
and we have found that Corollary~\ref{crl1} holds with $\delta_n = n$.

\begin{appendices}
\section{Mean value and variance of \ztheta{1}{} for the mean-field Ising model}
\label{app:mean-var}
In this appendix we present in full details the estimates for $\mean{\ztheta{1}}$ and
$\text{Var}[\ztheta{1}]$ we have used to apply Corollary~\ref{crl1} to the
magnetization chain in Section~\ref{sec:mean-field-ising}. Since
for $\beta = 0$ the magnetization chain reduces to the Ehrenfest
chain, the following estimates hold as well for the Ehrenfest Urn
model presented in Section~\ref{sec:ehrenfest}.

Standard formulas (see e.g.~\cite{barrera2009abrupt}) give
\begin{align}
  \mean{\ztheta{1}} &= \sum_{k=\kinf+1}^\ksup \mean{\zeta_{k\to
    k-1}} =\sum_{k=\kinf+1}^\ksup \frac{1}{q_k} \sum_{j=k}^\ksup \frac{\pin(j)}{\pin(k)}
  \label{eq:147}\\
  \text{Var}[\ztheta{1}] &= \sum_{k=\kinf+1}^\ksup \text{Var}[\zeta_{k\to
    k-1}]\nonumber\\ 
  &= \sum_{k=\kinf+1}^\ksup \frac{1}{q_k}
  \sum_{j=k}^\ksup \left(2\mean{\zeta_{j\to k-1}} -\mean{\zeta_{k\to
        k-1}} \right)\frac{\pin(j)}{\pin(k)} - \mean{\ztheta{1}} \label{eq:121}
\end{align}
where $\zeta_{k\to k-1}$ is the first time the chain visits $k-1$ after
visiting $k$ and
\begin{align}
  \label{eq:208}
  &q_k = \frac{n}{\nhpk}\left(1+e^{\frac{4\beta}{n}(k-1) }\right)\\
  \label{eq:210}
  &\frac{\pin(j)}{\pin(k)} = \frac{ {n \choose \frac{n}{2}+j} }{ {n \choose \nhpk} }
  e^{\frac{2\beta}{n} (j^2-k^2)}
\end{align}

Let us begin rewriting the ratio of the two binomial coefficients as
\begin{align}
  \frac{ {n \choose \frac{n}{2}+j} }{ {n \choose \nhpk} } &= 
  \prod_{i=0}^{j-k-1}
  \frac{\frac{n}{2}-k-i}{\frac{n}{2}+k+i+1}\nonumber\\
  &= \left(\frac{\frac{n}{2}-k}{\nhpk+1}\right)^{j-k}\;
  \prod_{i=0}^{j-k-1}\left(1-\frac{i}{\frac{n}{2}-k}\right)\left(\frac{1}{1+\frac{i}{\frac{n}{2}+k+1}}\right)\label{eq:134}
\end{align}
Next, note that for any of the values of triple $(i,j,k)$ involved in the calculations
\begin{equation}
  \label{eq:146}
  0 \leq \frac{i}{\frac{n}{2}+k+1} \leq \frac{i}{\frac{n}{2}-k} \leq 1
\end{equation}
So we find handy the following two easy lemmas.
\begin{lmm}
\label{lmm:exp1}
  For $x \in [0,1]$
  \begin{equation}
    \label{eq:148}
    (1-x)\frac{1}{1+x} \leq e^{-2x}
  \end{equation}
\end{lmm}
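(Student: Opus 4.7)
The plan is to reduce the inequality to a familiar series comparison by passing to logarithms. On the open interval $x\in[0,1)$ the claim $(1-x)/(1+x)\leq e^{-2x}$ is equivalent, after taking $\log$ of both sides, to
\[
\log(1+x)-\log(1-x)\geq 2x.
\]
The standard power series expansion valid for $|x|<1$ gives
\[
\log\!\left(\frac{1+x}{1-x}\right)\;=\;2\sum_{k\geq 0}\frac{x^{2k+1}}{2k+1},
\]
whose first summand is precisely $2x$ and whose remaining summands are non-negative for $x\geq 0$. This yields the inequality on $[0,1)$. The boundary case $x=1$ is trivial, since the left-hand side of the original inequality equals $0$ while the right-hand side equals $e^{-2}>0$.

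If a series-free argument is preferred, an alternative is to set $f(x)=e^{-2x}(1+x)-(1-x)$ and verify $f\geq 0$ on $[0,1]$ by differentiation: one checks $f(0)=0$, $f'(x)=e^{-2x}(-1-2x)+1$ with $f'(0)=0$, and $f''(x)=4x\,e^{-2x}\geq 0$ on $[0,1]$. Thus $f'$ is non-decreasing and non-negative, so $f$ is non-decreasing from $f(0)=0$, proving $f\geq 0$.

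There is no real obstacle here; both routes are elementary. The only minor care is the endpoint $x=1$, which is excluded from the power-series identity and must be handled separately (or absorbed by continuity from the left if one works with the analytic form $f\geq 0$). I would present the logarithmic argument because it most cleanly exposes the role of the odd-power series $x+x^3/3+x^5/5+\cdots$ as the quantity controlling the gap between the two sides.
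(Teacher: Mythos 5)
Your proof is correct, and in fact you give two valid arguments (the $\log$ series bound and the second-derivative/convexity bound). The paper itself does not record a proof of this lemma; it is introduced as one of ``two easy lemmas'' and used without justification, so there is nothing to compare against. Either of your routes is appropriate; the series argument is the cleaner one, since it immediately exhibits the gap $2\sum_{k\geq 1} x^{2k+1}/(2k+1)\geq 0$, and your handling of the endpoint $x=1$ (where the series identity for $\log\frac{1+x}{1-x}$ fails but the original inequality reads $0\leq e^{-2}$) is the right care to take.
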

\begin{lmm}
\label{lmm:exp2}
  For $0\leq y \leq x \leq 1$
  \begin{equation}
    \label{eq:152}
    (1-x)\frac{1}{1+y} \leq e^{-x-y}
  \end{equation}
\end{lmm}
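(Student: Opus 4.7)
The plan is to derive Lemma~\ref{lmm:exp2} by factoring the left-hand side so that Lemma~\ref{lmm:exp1} does most of the work. Specifically, I would write
\[
(1-x)\frac{1}{1+y} \;=\; \frac{1-x}{1+x}\cdot\frac{1+x}{1+y},
\]
and then estimate the two factors separately.

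For the first factor, Lemma~\ref{lmm:exp1} (applied at the point $x\in[0,1]$) gives directly $\frac{1-x}{1+x}\leq e^{-2x}$. The only substantive step is to control the second factor by $e^{x-y}$. Using $y\geq 0$, one has
\[
\frac{1+x}{1+y} \;=\; 1+\frac{x-y}{1+y} \;\leq\; 1+(x-y) \;\leq\; e^{\,x-y},
\]
where the first inequality uses $y\geq 0$ (so that $\tfrac{1}{1+y}\leq 1$ and $x-y\geq 0$), and the second is the elementary bound $1+t\leq e^t$ for $t\geq 0$. Multiplying the two estimates yields
\[
(1-x)\frac{1}{1+y} \;\leq\; e^{-2x}\cdot e^{\,x-y} \;=\; e^{-x-y},
\]
which is the desired conclusion.

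An alternative approach, should the factoring above turn out to be less clean than hoped, is to argue directly on $F(x,y)=\log(1-x)-\log(1+y)+x+y$ and show $F\leq 0$ on $\{0\leq y\leq x<1\}$: the partial derivative $\partial_x F=-x/(1-x)\leq 0$ reduces the problem to the diagonal $x=y$, on which $h(y):=F(y,y)=\log\frac{1-y}{1+y}+2y$ satisfies $h(0)=0$ and $h'(y)=-2y^2/(1-y^2)\leq 0$, so $h\leq 0$.

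No real obstacle is expected here; the only mild subtlety is the boundary case $x=1$, which I would handle by continuity (or by noting that both sides of the original inequality are nonnegative and the left-hand side is $0$ there). Given that the two factors in the factorization each admit the needed exponential bound, the first route seems the most direct and is the one I would present.
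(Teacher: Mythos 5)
Your proof is correct. Since the paper states Lemma~\ref{lmm:exp2} (together with Lemma~\ref{lmm:exp1}) without proof, merely calling them ``easy lemmas,'' there is no in-paper argument to compare against. Your primary route is a clean one: the factorization
\[
(1-x)\frac{1}{1+y} \;=\; \frac{1-x}{1+x}\cdot\frac{1+x}{1+y}
\]
makes Lemma~\ref{lmm:exp1} do the heavy lifting for the first factor, and the chain
\[
\frac{1+x}{1+y} = 1+\frac{x-y}{1+y} \leq 1+(x-y) \leq e^{x-y}
\]
is valid precisely because $y\geq 0$ (so $\tfrac{1}{1+y}\leq 1$) and $x-y\geq 0$ (so the resulting inequality is not reversed). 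Multiplying gives $e^{-2x}\cdot e^{x-y}=e^{-x-y}$, as required. The alternative two-variable calculus argument is also correct: $\partial_x F=-x/(1-x)\leq 0$ reduces to the diagonal, where $h'(y)=-2y^2/(1-y^2)\leq 0$ and $h(0)=0$. Both proofs are sound; the factoring one is shorter and has the mild structural advantage of directly reusing Lemma~\ref{lmm:exp1}, which is likely the intent given that the two lemmas are stated together.
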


In virtue of Lemma~\ref{lmm:exp2} we can bound line~\eqref{eq:134} as
follows:
\begin{align}
  \frac{ {n \choose \frac{n}{2}+j} }{ {n \choose \nhpk} }&\leq 
  \left(\frac{\frac{n}{2}-k}{\frac{n}{2}+k}\right)^{j-k}\,
  e^{-\sum_{i=0}^{j-k-1} \left[\frac{i}{\frac{n}{2}-k} +
      \frac{i}{\frac{n}{2}+k+1}\right]}  \label{eq:155}\\
  &= \left(\frac{\frac{n}{2}-k}{\frac{n}{2}+k}\right)^{j-k}\,
  e^{ \frac{ -2(j-k)^2+2(j-k) }{ n\left(
        1-\frac{4k^2}{n^2}+\frac{2}{n}-\frac{4k}{n^2} \right) } }
  e^{ \frac{ -2(j-k)^2+2(j-k) }{ n^2\left(
        1-\frac{4k^2}{n^2}+\frac{2}{n}-\frac{4k}{n^2} \right) }
  }\label{eq:156}\\
  &\leq \left(\frac{\frac{n}{2}-k}{\frac{n}{2}+k}\right)^{j-k}\,
  e^{ \frac{ -2(j-k)^2+2(j-k) }{ n\left(
        1-\frac{4k^2}{n^2}+\frac{2}{n}-\frac{4k}{n^2} \right) } }\label{eq:157}
\end{align}
Thus, for $\kinf \leq k \leq \frac{n}{2}-\log n$,
\begin{align}
  \sum_{j=k}^\ksup \frac{ {n \choose \frac{n}{2}+j} }{ {n \choose
      \nhpk} } e^{\frac{2\beta}{n}(j^2-k^2)} &\leq
  \sum_{j=k}^\ksup \left(\frac{\frac{n}{2}-k}{\frac{n}{2}+k}\right)^{j-k}\,
  e^{ \frac{ -2(j-k)^2+2(j-k) }{ n\left(
        1-\frac{4k^2}{n^2}+\frac{2}{n}-\frac{4k}{n^2} \right) } }
  e^{\frac{2\beta}{n}(j^2-k^2)}\label{eq:158}\\
  &\leq \sum_{l=0}^\ksupk
  \left(\frac{\frac{n}{2}-k}{\frac{n}{2}+k}\,e^{\frac{4\beta
        k}{n}}\right)^l\,
  e^{\frac{-2(1-\beta)l^2 + 2l }{ n\left( 1-\frac{4k^2}{n^2} +
        \frac{2}{n} - \frac{4k}{n^2} \right) }}\label{eq:160}\\
  &\leq \sum_{l=0}^\infty
  \left(\frac{\frac{n}{2}-k}{\frac{n}{2}+k}\,e^{\frac{4\beta
        k}{n}}\right)^l \,\left(
    1+O\left(\log^{-1}n\right)\right) \label{eq:161}\\
  &=\frac{\left(\nhpk \right) \left(
      1+O\left(\log^{-1}n\right)\right)}{\frac{n}{2}\left(1-e^{\frac{4\beta
          k}{n}}\right)+ 
    k\left(1+e^{\frac{4\beta k}{n}}\right)} \label{eq:162}\\
  &\leq \frac{\nhpk}{2(1-\beta)k} \, \left(
    1+O\left(\log^{-1}n\right)\right)\label{eq:163}
\end{align}
Therefore we obtain the following upper bounds:
\begin{align}
  \mean{\ztheta{1}} &\leq \frac{n}{2(1-\beta)}
  \left[\sum_{k=\kinf+1}^{\ksup-\log n}
  \frac{\fone}{k}\right] \, \left(
  1+O\left(\log^{-1}n\right)\right) \nonumber\\
  &\qquad + c_1\,\sum_{k=\frac{n}{2}-\log n}^\ksup
  \frac{n}{\nhpk}\left(1+e^{\frac{4\beta}{n}(k-1) }\right)
  \sum_{l=0}^{\ksup-k} \left( c_2\frac{\log n}{n} \right)^l\label{eq:164}\\
  &= \frac{1}{2(1-\beta)} n \log n + O(n)\label{eq:165} 
\end{align}
and
\begin{align}
  \label{eq:189}
  \mean{\ztheta{1}-\ztheta{\theta}} &\leq \frac{n}{2(1-\beta)}
  \left[\sum_{k=\kinf+1}^{\frac{\theta}{2}\sqrt{\frac{n}{1-\beta}}} \frac{\fone}{k}\right] + O(n)\\
  &=  (1 + \log \theta) O(n)\label{eq:190}
\end{align}

From previous computations, noticing that
\begin{equation}
  e^{\frac{-2(1-\beta)l^2 +
    2l }{ n\left( 1-\frac{4k^2}{n^2} + \frac{2}{n} - \frac{4k}{n^2}
    \right) }} \leq \sqrt{e} \label{eq:191}
\end{equation}
we have that
\begin{equation}
  \label{eq:167}
  \mean{\zeta_{k\to k-1}} \leq \sqrt{e} \: \frac{\fone}{2(1-\beta)}\frac{n}{k}
\end{equation}
and then by summation
\begin{equation}
  \label{eq:168}
  \mean{\zeta_{k+l\to k-1}} \leq \frac{\sqrt{e}}{2(1-\beta} n \log\left(1+\frac{l}{k}\right) + O(n)
\end{equation}
From~\eqref{eq:121}, using~\eqref{eq:157} and~\eqref{eq:167}-\eqref{eq:168}, 
we can easily bound the variance of $\zeta_{k\to k-1}$ as follows:
\begin{align}
  \text{Var}[\zeta_{k\to k-1}] &\leq \frac{n}{\nhpk}\left(1+e^{\frac{4\beta
        }{n}(k-1)}\right) \sum_{j=k}^\ksup \mean{\zeta_{j\to
      k-1}} \frac{ {n\choose \frac{n}{2}+j} }{ {n \choose \nhpk} }
  e^{\frac{2\beta}{n} (j^2-k^2)}\label{eq:166}\\
  & \leq \frac{c_\beta \,  n^2}{\nhpk} \sum_{l=0}^\ksupk
  \log\left(1+\frac{l}{k}\right)
  \left(\frac{\frac{n}{2}-k}{\frac{n}{2}+k} \, e^{\frac{4\beta
        k}{n}}\right)^l \label{eq:169}\\
  &\leq \frac{c_\beta \,  n^2}{\nhpk} \sum_{l=0}^\ksupk \frac{l}{k}
  \left(\frac{\frac{n}{2}-k}{\frac{n}{2}+k} \, e^{\frac{4\beta k}{n}}
  \right)^l \label{eq:170}\\
  &\leq \frac{c_\beta \,  n^2}{\nhpk} \, \frac{1}{k} \, \sum_{l=0}^\infty l
  \left(\frac{\frac{n}{2}-k}{\frac{n}{2}+k} \, e^{\frac{4\beta k}{n}}
  \right)^l \label{eq:171}\\
  & \leq \frac{c_\beta \,  n^2}{\left(\nhpk\right)^2} \, \frac{1}{k} \,
 \left( \ksupk\right) \,
  \left(\frac{\nhpk}{2(1-\beta)k}\right)^2\label{eq:172}\\
  & \leq c_\beta \frac{n^3}{k^3}\label{eq:173}
\end{align}
Therefore $\text{Var}[\ztheta{1}] = \sum_{k=\kinf+1}^\ksup
\text{Var}[\zeta_{k\to k-1}] $ grows at most as $O(n^2)$. 

Eventually, let us bound from below the expectation
$\mean{\ztheta{1}}$.
From~\eqref{eq:147} and~\eqref{eq:134} we have
\begin{align}
  \mean{\ztheta{1}} \geq \sum_{\kinflog}^{\ksuplog}
  \frac{n}{\nhpk}\left( 1+e^{\frac{4\beta k}{n}(k-1)} \right) 
  \sum_{j=k}^{k+\ksuploglog} \Bigg[ \left(
    \frac{\frac{n}{2}-k}{\frac{n}{2}+k} \right)^{j-k} \nonumber\\
   \prod_{i=0}^{j-k-1}\left(1-\frac{i}{\frac{n}{2}-k}\right)
\left(\frac{1}{1+\frac{i}{\frac{n}{2}+k+1}}\right) 
e^{\frac{2\beta}{n} (j^2-k^2)}\Bigg] \label{eq:71}
\end{align}
Then we have
\begin{align}
  &\sum_{j=k}^{k+\ksuploglog}  \left(
    \frac{\frac{n}{2}-k}{\frac{n}{2}+k} \right)^{j-k}
   \prod_{i=0}^{j-k-1}\left(1-\frac{i}{\frac{n}{2}-k}\right)
\left(\frac{1}{1+\frac{i}{\frac{n}{2}+k+1}}\right) 
e^{\frac{2\beta}{n} (j^2-k^2)} \nonumber\\
  \label{eq:112}
&\geq\sum_{j=k}^{k+\ksuploglog} \left(
    \frac{\frac{n}{2}-k}{\frac{n}{2}+k} \right)^{j-k} 
   \left(\prod_{i=0}^{j-k-1} \left(1-\frac{i}{\frac{n}{2}-k}\right)
     e^{-\frac{i}{\frac{n}{2}+k+1}}\right) 
   e^{\frac{2\beta}{n} (j^2-k^2)}\\
  \label{eq:77}
  &\geq\sum_{j=k}^{k+\ksuploglog} \left[\left(
    \frac{\frac{n}{2}-k}{\frac{n}{2}+k} \right)^{j-k}
   \left(\prod_{i=0}^{j-k-1} e^{-\frac{i}{\frac{n}{2}-k} -
       \frac{i}{\frac{n}{2}+k+1}}\right) 
   e^{\frac{2\beta}{n} (j^2-k^2)} + \varepsilon_1 \right]
\end{align}
where $\varepsilon_1$ tends to $0$ exponentially fast in $n$.
\begin{rmk}
  The error $\varepsilon_1$ gives a negligible contribution
  to $\mean{\ztheta{1}}$ being exponentially small, for this reason
  we will henceforth drop it. 
\end{rmk}
The right-hand in~\eqref{eq:77} can be rewritten as follows
\begin{align}
  \label{eq:93}
  &\sum_{j=k}^{k+\ksuploglog} \left(
    \frac{\frac{n}{2}-k}{\frac{n}{2}+k} \right)^{j-k}
   e^{ \frac{ -2(j-k)^2+2(j-k) }{ n\left(
        1-\frac{4k^2}{n^2}+\frac{2}{n}-\frac{4k}{n^2} \right) } +
    \frac{ -2(j-k)^2+2(j-k) }{ n^2\left(
        1-\frac{4k^2}{n^2}+\frac{2}{n}-\frac{4k}{n^2} \right) }
    +\frac{2\beta}{n} (j^2-k^2)} \\
  \label{eq:94}
  &= \sum_{l=0}^{\ksuploglog} \left(
    \frac{\frac{n}{2}-k}{\frac{n}{2}+k} e^{\frac{4\beta k}{n}}\right)^{l}
   e^{ \frac{ -2l^2+2l }{ n\left(
        1-\frac{4k^2}{n^2}+\frac{2}{n}-\frac{4k}{n^2} \right) } +
    \frac{ -2l^2+2l }{ n^2\left(
        1-\frac{4k^2}{n^2}+\frac{2}{n}-\frac{4k}{n^2} \right) }
    +\frac{2\beta}{n} l^2}\\
  \label{eq:95}
  &= \sum_{l=0}^{\ksuploglog} \left(
    \frac{\frac{n}{2}-k}{\frac{n}{2}+k} e^{\frac{4\beta k}{n}}\right)^{l}
   e^{ \frac{ -2l^2+2l }{ n\left(
        1-\frac{4k^2}{n^2}+\frac{2}{n}-\frac{4k}{n^2} \right) } 
    +\frac{2\beta}{n} l^2} (1 + \varepsilon_2)
\end{align}
with $\varepsilon_2 = o\left( n^{-1} \right)$. Now set $\varphi =
-\frac{4k^2}{n^2}+\frac{2}{n}-\frac{4k}{n^2}$, then~\eqref{eq:95} can
be rewritten as follows
\begin{align}
  \label{eq:96}
  &\sum_{l=0}^{\ksuploglog} \left(
    \frac{\frac{n}{2}-k}{\frac{n}{2}+k} e^{\frac{4\beta k}{n}}\right)^{l}
   e^{ \frac{ -2(1-\beta(1+\varphi))l^2}{n(1+\varphi)} 
    +\frac{2l}{n(1+\varphi)} } (1+ \varepsilon_2)\\
  \label{eq:108}
  &= \sum_{l=0}^{\ksuploglog} \left(
    \frac{\frac{n}{2}-k}{\frac{n}{2}+k} e^{\frac{4\beta
        k}{n}}\right)^{l} (1 -
  \varepsilon)(1+\varepsilon_3)(1+\varepsilon_2)\\
  \label{eq:114}
 &= \sum_{l=0}^{\ksuploglog} \left(
    \frac{\frac{n}{2}-k}{\frac{n}{2}+k} e^{\frac{4\beta
        k}{n}}\right)^{l} (1 - \varepsilon)
\end{align}
where $\varepsilon = O\left(\log^{-2}(\log n) \right)$ and
$\varepsilon_3 = O\left( n^{-\frac{1}{2}}\log\log n \right)$.

Therefore
\begin{align}
  &\mean{\ztheta{1}} \geq (1-\varepsilon)\sum_{\kinflog}^{\ksuplog}
  \frac{n}{\nhpk}\left( 1+e^{\frac{4\beta k}{n}(k-1)} \right)
  \sum_{l=0}^{\ksuploglog} \left( \frac{\frac{n}{2}-k}{\frac{n}{2}+k}
    e^{\frac{4\beta k}{n}}\right)^{l} \nonumber\\
  &= \mean{\ztheta{1}} \geq (1-\varepsilon)\sum_{\kinflog}^{\ksuplog}
  \Bigg\{ \frac{n}{\nhpk} \left( 1+e^{\frac{4\beta k}{n}(k-1)} \right)
  \nonumber\\
  &\qquad\frac{(\nhpk+1)}{\frac{n}{2}(1-e^{\frac{4\beta k}{n}}) +
    k(1+e^{\frac{4\beta k}{n}})+1} \left[ 1-\left(
      \frac{\frac{n}{2}-k}{\nhpk+1} e^{\frac{4\beta k}{n}}
    \right)^{1+\frac{\sqrt{n}}{\log\log n}}
  \right] \Bigg\} \nonumber\\
  &\geq (1-\gamma) (1-\varepsilon)\sum_{\kinflog}^{\ksuplog} \frac{n}{\nhpk} \left(
    2+O\left( \log^{-1}n \right) \right) 
  \,\Gamma
  \label{eq:125}
\end{align}
where
\begin{equation}
  \label{eq:184}
    \gamma = \left[\frac{1-\frac{\log
          n}{\sqrt{n(1-\beta)}}}{1+\frac{\log
          n}{\sqrt{n(1-\beta)}}} \left( 1+2\beta \frac{\log
          n}{\sqrt{n(1-\beta)}} + O\left( \frac{\log^2n}{n} \right) \right)
    \right]^{1+\frac{\sqrt{n}}{\log\log n}}
\end{equation}
and
\begin{equation}
  \label{eq:185}
  \Gamma = \frac{\nhpk}{\frac{n}{2}(\frac{-4\beta k}{n}+O\left(
      \log^{-2}n \right)) + 
    k(2+\frac{4\beta k}{n} + O\left( \log^{-2}n \right))+2}
\end{equation}

Now, $\gamma$ can be rewritten as
\begin{equation}
  \label{eq:186}
  \gamma = \left[ 1-\frac{2(1-\beta)\log
      n}{\sqrt{n(1-\beta)}} +  O\left(
      \frac{\log^2n}{n}\right)\right]^{1+\frac{\sqrt{n}}{\log\log n}}  
\end{equation}
Therefore $\gamma$ tends asintotically to $0$.

The right-hand in~\eqref{eq:125} now becomes
\begin{equation}
  \label{eq:187}
  (1-\gamma) (1-\varepsilon)\sum_{\kinflog}^{\ksuplog} 
\frac{2n \left( 1+O\left( \log^{-1}n \right) \right) }{2k(1-\beta) + 2
  + O\left( \log^{-1}n \right)} 
\end{equation}
from which we see that, to the leading order in $n$
\begin{equation}
  \label{eq:188}
  \mean{\ztheta{1}} \geq \frac{1}{2(1-\beta)}n\log n 
\end{equation}

\section{Mean value and variance of \ztheta{1}{} for the partially
  diffusive random walk}
\label{app:mean-if}

Standard formulas (see e.g.~\cite{barrera2009abrupt}) give
\begin{align}
  \label{eq:192}
  \mean{\ztheta{1}} &= \sum_{k=n^\varepsilon+1}^n \mean{\zeta_{k\to k-1}} =\sum_{k=n^\varepsilon+1}^n \frac{2n}{k} \sum_{m=k}^n \frac{\pin(m)}{\pin(k)}
\end{align}
where $\zeta_{k\to k-1}$ is the first time the chain visits $k-1$ after
visiting $k$. By means of~\eqref{eq:53a}-\eqref{eq:54a} and reversibility,
\begin{align}
  \label{eq:76}
  \phi(k) &= \sum_{m=k}^n \frac{\pin(m)}{\pin(k)}\\
  \label{eq:193}
  &= \sum_{m=k}^n \frac{k}{m} 2^{k-m}\\
  \label{eq:194}
  &\simeq k2^k \int_{-k\log2}^{-n\log2} \frac{e^t}{t} \, dt
\end{align}
Using the properties of the exponential integral we get
\begin{equation}
  \label{eq:195}
  \phi(k) = \frac{1}{\log2} - \frac{k}{n\log2}2^{(k-n)} + O\left(\frac{1}{k}\right)
\end{equation}
and therefore
\begin{equation}
  \label{eq:196}
  \mean{\ztheta{1}} = \frac{2(1-\varepsilon)}{\log2} n \log n +
  O\left( n^{1-\varepsilon}\right)
\end{equation}
Similarly, for $n$ sufficiently large we have that
\begin{equation}
  \label{eq:198}
  \mean{\ztheta{1}-\ztheta{\theta}} = \sum_{k=n^\varepsilon +
    1}^{n^\varepsilon \theta^{n^{2\varepsilon-1}}} \frac{2n}{k}
  \phi(k) = \frac{2 n^{2\varepsilon}}{\log2} \log\theta + O\left(
    n^\varepsilon\log\theta \right)
\end{equation}
From~\eqref{eq:198} we see that for $n$ sufficiently large
$\mean{\ztheta{1}-\ztheta{\theta}}$ grows as $n^{2\varepsilon}$ at
most.

To compute $\text{Var}[\ztheta{1}]$ we use the following formulas
\begin{align}
  \label{eq:200}
  \text{Var}[\ztheta{1}] &= \sum_{k=n^\varepsilon + 1}^n
  \text{Var}[\zeta_{k\to k-1}]\\
  \text{Var}[\zeta_{k\to k-1}] &= \frac{2n}{k} \sum_{m=k}^n \left(
  2\mean{\zeta_{m\to k-1}} - \mean{\zeta_{k\to k-1}}\right)
\frac{\pin(m)}{\pin(k)}\nonumber\\ 
&\qquad\qquad - \mean{\zeta_{k\to k-1}} \label{eq:181}
\end{align}
Then we estimate the sum from below as its first term
\begin{align}
  \label{eq:201}
  \text{Var}[\zeta_{k\to k-1}]  \geq \left( \frac{2n}{k} -1 \right)
  \mean{\zeta_{k\to k-1}}
  = \left( \frac{2n}{k} -1 \right) \frac{2n}{k} \phi(k)
\end{align}
and from above as
\begin{align}
  \label{eq:203}
  \text{Var}[\zeta_{k\to k-1}] & \leq \frac{4n}{k} \sum_{m=k}^n
  \mean{\zeta_{m\to k-1}} \frac{\pin(m)}{\pin(k)}\\
  \label{eq:204}
  &\leq \frac{c\,n^2}{k} \sum_{m=k}^n \log\left( \frac{m}{k} \right)
  \frac{\pin(m)}{\pin(k)}\\
  \label{eq:205}
  &= c\,n^2 \sum_{m=k}^n \frac{2^{k-m}}{m} \log\left( \frac{m}{k} \right)\\
  \label{eq:206}
  &= c\,n^2 \sum_{j=0}^{n-k} \frac{2^{-j}}{k+j} \log\left( 1+\frac{j}{k} \right)\\
  \label{eq:207}
  &\leq \frac{c\,n^2}{k^2} \sum_{j=0}^{\infty} \, j \, 2^{-j}
\end{align}
From~\eqref{eq:201} and~\eqref{eq:207} we see that 
$\text{Var}[\zeta_{k\to k-1}] = O\left( \frac{n^2}{k^2}
\right)$ and therefore, to the leading order, $\text{Var}[\ztheta{1}] =
O\left( n^{2-\varepsilon} \right)$.

\end{appendices}


\phantomsection
\addcontentsline{toc}{section}{References}

\bibliography{lns}
\bibliographystyle{plainyr-rev}

\end{document}